\newtheorem{theorem}{Theorem}
\newtheorem{lemma}{Lemma}
\newtheorem{fact}{Fact}
\newtheorem{definition}{Definition}
\newcommand{\calA}{\mathcal{A}}
\newcommand{\calB}{\mathcal{B}}
\newcommand{\calL}{\mathcal{L}}
\newcommand\sgn[0]{{\rm sgn}}
\def\Tarski{\textsc{Tarski}} \def\calL{\mathcal{L}}
\title{Improved Upper Bounds for Finding Tarski Fixed Points}
\author{Xi Chen \\Columbia University \\ \texttt{{xichen@cs.columbia.edu}}
  \and Yuhao Li \\Columbia University\\ \texttt{yuhaoli@cs.columbia.edu}\vspace{0.3cm}}
\date{}
\begin{document}

\maketitle

\begin{abstract}
We study the query complexity of finding a Tarski fixed point
  over the $k$-dimensional grid $\{1,\ldots,n\}^k$.
Improving on the previous best upper bound of 
  $\smash{O(\log^{\lceil 2k/3\rceil} n)}$ \cite{fearnley2020fasterarxiv}, we give a new algorithm with 
  query complexity $\smash{O(\log^{\lceil (k+1)/2\rceil} n)}$.
This is based on a novel decomposition theorem about a weaker variant of 
  the Tarski fixed point problem, where the input consists of  a monotone function $f:[n]^k\rightarrow [n]^k$
  and a monotone sign function $b:[n]^k\rightarrow \{-1,0,1\}$ and the goal is to find an $x\in [n]^k$ 
  that satisfies \emph{either}
  $f(x)\preceq x$ and $b(x)\le 0$ \emph{or} $f(x)\succeq x$ and $b(x)\ge 0$.
  \end{abstract}
 
\thispagestyle{empty}
\newpage 
\setcounter{page}{1}

\section{Introduction}

In 1955, Tarski \cite{tarski1955lattice} proved that every monotone\footnote{We
  say $f$ is monotone if $f(a)\preceq f(b)$ whenever $a\preceq b$.} function $f:L\rightarrow L$
  over a complete lattice $(L,\preceq)$ has a fixed point, i.e. $x\in L$
  with $f(x)=x$.
Tarski's fixed point theorem has extensive applications in 
  game theory and economics,
  where it has been used to establish the 
  existence of important solution concepts 
  such as pure equilibria in supermodular games \cite{topkis1979equilibrium,topkis1998supermodularity,milgrom1990rationalizability}.
As a byproduct, search problems for these solution concepts naturally reduce to
  the problem of finding Tarski fixed points, which motivates 
  the investigation of its computational complexity.
More compelling motivations for studying Tarski fixed points came from a recent work of Etessami, Papadimitriou, Rubinstein and Yannakakis \cite{etessami2020tarski}, where they 
discovered 
  new connections of the Tarski fixed point problem 
  with well studied complexity classes such as PPAD and PLS, as well as reductions
  from Condon's (Shapley's) stochastic games \cite{condon1992complexity} to the Tarski fixed point problem.
However, our current understanding of the complexity of Tarski fixed points
  remains rather limited, whether it is about the query complexity of finding a Tarski fixed point or the white box version (i.e., when the function is given as a Boolean circuit) of the problem (e.g., whether the problem is complete in the class CLS~\cite{fearnley2021complexity,goos2022further} as the intersection of PPAD and PLS). This is in sharp contrast with 
  Brouwer's fixed point theorem \cite{papadimitriou1994complexity,chen2008matching,chen2009complexity},
  the other fixed point theorem that played a major role in economics.

In this paper we study the query complexity of finding a Tarski fixed point 
  in the complete lattice $([n]^k,\preceq)$ over the $k$-dimensional grid $[n]^k=\{1,\ldots,n\}^k$
  and equipped with the natural partial
  order over $\mathbb{Z}^k$, where $a\preceq b$ if $a_i\le b_i$ for all $i\in [k]$.
An algorithm under this model is given $n$ and $k$ and 
  has  query access to an unknown monotone function $f$ over $[n]^k$.
Each round the algorithm can send a query $x\in [n]^k$ to reveal $f(x)$
  and the goal is to find a fixed point of $f$ using as few queries as possible.
We will refer to this problem as $\textsc{Tarski}(n,k)$.
  
Back in 2011, Dang, Qi, and Ye \cite{dang2011computational} obtained 
  an $\smash{O(\log^k n)}$-query algorithm for $\textsc{Tarski}(n,k)$
  when $k$ is fixed. Their algorithm is based on a natural
  binary search strategy over coordinates.
No progress had been made on the problem until recently.
In \cite{etessami2020tarski}, Etessami et al. showed that (among other results) the upper bound $O(\log^2 n)$ for
  $\textsc{Tarski}(n,2)$ \cite{dang2011computational} 
  is indeed tight (even for randomized algorithms), which suggested that the algorithm of Dang et al.    
  might be optimal for all fixed $k$.
However, surprisingly, Fearnley, P\'alv\"olgyi and Savani~\cite{fearnley2020fasterarxiv}
  recently showed that the algorithm of \cite{dang2011computational}
  is not optimal by giving an
   algorithm for $\textsc{Tarski}(n,k)$ with $\smash{O(\log^{\lceil 2k/3\rceil}n)}$  queries.
 \vspace{0.3cm}

\noindent\textbf{Our contribution.}
Our main result is an improved upper bound 
  for the complexity of $\textsc{Tarski}$:

\begin{theorem}\label{main theorem}
For any fixed $k$, there is an $O \big(\log^{\lceil (k+1)/2\rceil} n\big )$-query algorithm for
  $\textsc{Tarski}(n,k)$.
\end{theorem}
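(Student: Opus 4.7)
I will work with the weaker variant from the abstract: given monotone $f:[n]^k\to[n]^k$ together with a monotone sign function $b:[n]^k\to\{-1,0,1\}$, find $x\in[n]^k$ with either $f(x)\preceq x$ and $b(x)\le 0$, or $f(x)\succeq x$ and $b(x)\ge 0$. Write $T(n,k)$ for the query complexity of \Tarski{}$(n,k)$ and $W(n,k)$ for that of the weaker variant. Every fixed point of $f$ is automatically a solution to the weaker problem irrespective of $b$, so $W(n,k)\le T(n,k)$; the point is that the solution set is strictly larger, leaving room to hope $W$ is strictly cheaper. The plan is to chain a reduction of \Tarski{} to the weaker variant with a decomposition of the weaker variant back to \Tarski{} in one lower dimension, closing the recurrence on $T(n,k)$.

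For the first reduction I will establish $T(n,k)\le O(\log n)\cdot W(n,k-1)+T(n,k-1)$. The algorithm maintains a sub-box $B\subseteq[n]^k$ that provably contains a fixed point of $f$ (up-point at its bottom, down-point at its top) and repeatedly halves the range of $B$'s $k$-th coordinate. For one halving, I fix $x_k$ at the midpoint $m$ of the current range, set $f'(y)=(f(y,m))_{1,\ldots,k-1}$ (monotone) and $b'(y)=\sgn\bigl(f(y,m)_k-m\bigr)$ (monotone in $y$ because $f$ is), and invoke the $(k{-}1)$-dimensional weak solver on $(f',b')$. A solution $y$ unpacks to either $f(y,m)\preceq(y,m)$ or $f(y,m)\succeq(y,m)$---a down-point or an up-point of $f$ on the slice---which by Tarski's theorem confines the fixed point to a sub-box whose $k$-th range has been halved. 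After $O(\log n)$ halvings the $k$-th coordinate is pinned to a single value and a $(k{-}1)$-dimensional \Tarski{} instance remains, contributing the $T(n,k-1)$ term.

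The decomposition theorem itself is the new technical content, and I will prove $W(n,k)\le T(n,k-1)+O(\log n)$. The plan is to spend $O(\log n)$ preliminary queries locating a threshold $t^\star\in[n]$ along a suitable chain where $b$ transitions, and then to run a single $(k{-}1)$-dimensional \Tarski{} solver on a monotone auxiliary $g:[n]^{k-1}\to[n]^{k-1}$ built from the two neighboring slices $x_k=t^\star$ and $x_k=t^\star{+}1$. The function $g$ is designed so that every fixed point $y^\star$ of $g$ automatically yields either the down-solution $(y^\star,t^\star)$ or the up-solution $(y^\star,t^\star{+}1)$ of the weaker problem, with the monotonicity of $b$ between the two slices---$b\le 0$ on the lower and $b\ge 0$ on the upper by choice of $t^\star$---forcing the required sign to hold. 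Substituting this bound into the first reduction yields $T(n,k)\le O(\log n)\cdot T(n,k-2)+O(\log^{\lceil k/2\rceil}n)$; with the base cases $T(n,1)=O(\log n)$ and $T(n,2)=O(\log^2 n)$ \cite{dang2011computational}, induction on $k$ closes the recurrence at $T(n,k)=O(\log^{\lceil(k+1)/2\rceil}n)$, as claimed.

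The main obstacle is the decomposition theorem. To beat the Dang--Qi--Ye recurrence I need \emph{only one} $(k{-}1)$-dimensional \Tarski{} call per reduction with at most an additive $O(\log n)$ overhead; a multiplicative $\log n$ factor at this step would merely recover the old $O(\log^k n)$ bound. The delicate part is that the naive sign combination of $b$ with the $k$-th coordinate displacement of $f$ is not monotone, and stitching the two slices $x_k=t^\star$ and $x_k=t^\star{+}1$ into a single monotone $g:[n]^{k-1}\to[n]^{k-1}$ has to be carried out coordinate by coordinate so that every ``conflict point''---where $b$ is negative but the $k$-th displacement of $f$ is positive, or vice versa---still resolves into a bona fide weaker-problem solution. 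Making this work requires exploiting the fact that $t^\star$ lies exactly at the boundary where $b$ flips sign, combined with the joint monotonicity of $f$ and $b$ in the $k$-th direction.
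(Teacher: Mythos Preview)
Your first reduction, $T(n,k)\le O(\log n)\cdot W(n,k-1)+T(n,k-1)$, is correct and matches the ``outer algorithm'' role of \Cref{lemma: Tarski*-Tarski} in the paper. The gap is in your decomposition step $W(n,k)\le T(n,k-1)+O(\log n)$.

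Your plan hinges on locating a threshold $t^\star$ such that $b\le 0$ everywhere on the slice $x_k=t^\star$ and $b\ge 0$ everywhere on the slice $x_k=t^\star+1$. No such $t^\star$ need exist: monotonicity of $b$ constrains values along comparable pairs, not across an entire slice. Concretely, take $k=2$ and $b(x_1,x_2)=\sgn(x_1-\lceil n/2\rceil)$; this $b$ is monotone, yet on every slice $x_2=t$ it takes both signs, so the hypothesis ``$b\le 0$ on the lower slice'' fails for every $t$. Binary searching along any single chain can only pin down where $b$ flips on that chain, not on a $(k{-}1)$-dimensional slice. Without the slice-wide sign control, there is no evident way to build a monotone $g:[n]^{k-1}\to[n]^{k-1}$ whose fixed point certifies a solution to the weak problem, and your sketch does not supply one; the acknowledged ``conflict points'' are exactly where the construction has to produce the required sign of $b$, and nothing forces it.

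The paper resolves this difficulty very differently. It never reduces $W$ back to $T$; instead it proves a decomposition for $W$ (their $\textsc{Tarski}^*$) in terms of itself, $W(n,a+b)\le O((b{+}1)\cdot W(n,a)\cdot W(n,b))$, and combines it with the base case $W(n,2)=O(\log n)$ from \cite{fearnley2020fasterarxiv}. The extra sign coordinate---your $b$, their $(k{+}1)$-st output of $g$---is precisely what blocks a clean reduction, and the paper handles it by introducing an intermediate problem $\textsc{RefinedTarski}^*$ that pins down the sign of that coordinate at a \emph{pair} of points $p^\ell\preceq p^r$ (postfixed/prefixed in the first $k$ coordinates), shown equivalent to $\textsc{Tarski}^*$ up to constants. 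This refinement is what lets the inner solver's answer be stitched consistently across rounds of the outer solver; your single-call additive scheme has no analogue of it.
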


Our algorithm is based on a new variant of the Tarski fixed point problem which we refer~to~as $\Tarski^*$.
It is inspired by
  the $O(\log^2 n)$-query algorithm of \cite{fearnley2020fasterarxiv}  for $\Tarski(n,3)$ (its inner algorithm
    in particular).
Our main contribution is a novel decomposition theorem for $\Tarski^*$, which
  leads to a more efficient recursive scheme for performing binary search on coordinates
  of the grid. 
We discuss the variant $\Tarski^*$ and its decomposition theorem next.

\subsection{Sketch of the Algorithm}\label{sec:sketch}

The algorithm of \cite{fearnley2020fasterarxiv}
  is obtained by combining an $O(\log^2 n)$-query algorithm for $\textsc{Tarski}(n,3)$ and a decomposition theorem.
Their algorithm for $\textsc{Tarski}(n,3)$ consists of an outer algorithm and an $O(\log n)$-query inner algorithm.
Given $f:[n]^3\rightarrow [n]^3$ as the input function,
  the outer algorithm
  starts by running the  inner algorithm to solve the following problem: 
\begin{flushleft}\begin{itemize}
\item Find a point $x\in [n]^3$ with $x_3= \lceil n/2\rceil$ such that $x$ is either prefixed ($f(x)\preceq x$)
  or postfixed ($x\preceq f(x)$).
Note that even though we focus on a layer of the grid (with $x_3=\lceil n/2\rceil$),
  the condition on $x$ being either prefixed or postfixed is about all three dimensions.
\end{itemize}\end{flushleft}
Once such a point $x$ is found, the outer algorithm can shrink the search space 
  significantly by only considering $\calL_{x,(n,n,n)}$ if $x$ is postfixed,
  or $\calL_{(1,1,1),x}$ if $x$ is prefixed, where we write $\calL_{a,b}$ to denote
  the grid with points $c:a\preceq c\preceq b$.
In both cases we obtain a grid $\calL_{a,b}$ such that $a\preceq b$, $a\preceq f(a)$ and $f(b)\preceq b$.
These conditions together guarantee that $f$ maps $\calL_{a,b}$ to itself and $f$ has a fixed point in $\calL_{a,b}$ (see \Cref{lemma: FixedPinab}).
Given that the side length of a dimension goes down by a factor of $2$ after each call to the inner algorithm,
  it takes no more than $O(\log n)$ calls to reduce the search space
  to a grid $\calL_{a,b}$ with $b_i-a_i\le 1$ and then a fixed point can be found by brute force.
The query complexity of the overall algorithm of \cite{fearnley2020fasterarxiv} for $\Tarski(n,3)$ is $O(\log^2 n)$.

After obtaining the $O(\log^2 n)$-query algorithm for $\Tarski(n,3)$,
  \cite{fearnley2020fasterarxiv} uses it to solve higher dimensional $\textsc{Tarski}$ 
  by proving a \emph{decomposition theorem}: if 
  $\textsc{Tarski}(n,a)$ can be solved in $q(n,a)$ queries and $\textsc{Tarski}(n,b)$ can be solved in $q(n,b)$ queries, then $\textsc{Tarski}(n,a+b)$ can be solved in $O(q(n,a)\cdot q(n,b))$ queries. 
Combined with the $O(\log^2 n)$-query algorithm for $\textsc{Tarski}(n,3)$, they obtain 
  an $O(\log^{\lceil 2k/3\rceil}n)$-query algorithm for $\textsc{Tarski}(n,k)$.

Our key idea is to develop a new decomposition theorem directly on
  the problem solved by the inner algorithm of \cite{fearnley2020fasterarxiv}, 
  and only apply the outer algorithm at the very end.
More formally we refer to the following problem as $\Tarski^*(n,k)$:\footnote{Note that our formal definition 
  in \Cref{sec:reducetotarski} will look different; the problems they capture are the same though.}  
\begin{flushleft}\begin{itemize}
\item Given a monotone function $f:[n]^{k+1}\rightarrow [n]^{k+1}$,
find a point $x$ with $x_{k+1}=\lceil n/2\rceil$ such that $x$ is either prefixed or postfixed.
As mentioned earlier, the condition on $x$ being either prefixed or postfixed is about all $k+1$ dimensions.
\end{itemize}\end{flushleft}  
Similar to the outer algorithm of \cite{fearnley2020fasterarxiv},
  any algorithm for $\Tarski^*(n,k)$ can be used as~a~subroutine to solve $\Tarski(n,k+1)$ with an $O(\log n)$-factor overhead
  (see \Cref{lemma: Tarski*-Tarski}).

The main technical contribution of this work is the proof of a new decomposition theorem for $\Tarski^*$: 
  if $\Tarski^*(n, a)$ can be solved in $q(n, a)$ queries and 
  $\Tarski^*(n, b)$ can be solved in $q(n, b)$ queries, then $\Tarski^*(n, a + b)$ can be solved in 
  $O(q(n, a)\cdot q(n, b))$ queries.
Now despite sharing the same statement\hspace{0.04cm}/\hspace{0.04cm}recursion,
  the proof of our decomposition theorem requires a number of new technical ingredients compared to that of \cite{fearnley2020fasterarxiv}.
This is mainly due to 
  the extra coordinate (i.e., coordinate $k+1$) that appears in $\Tarski^*$ but not in the original $\Tarski$. 
  
One obstacle is that the solution found
  by $\textsc{Tarski}^*$ appears to be too weak to directly prove the new decomposition theorem.
In particular, if one gets a postfixed point $x\preceq f(x)$ as a solution to $\Tarski^*(n,k)$, 
  both $x_{k+1}=f(x)_{k+1}$ or $x_{k+1}<f(x)_{k+1}$ could happen, and 
  this uncertainty would cause the proof strategy adopted by \cite{fearnley2020fasterarxiv} to fail. 
Instead we introduce~a stronger variant of $\textsc{Tarski}^*$ called $\textsc{RefinedTarski}^*$ (see \Cref{def: RefinedTarski*}) which
  poses further conditions on
  its solution regarding coordinate $k+1$. Given the same input, $\textsc{RefinedTarski}^*$ asks for two points $p^{\ell}\preceq p^r$ with $p^{\ell}_{k+1}=p^r_{k+1}=\lceil n/2\rceil$ such that 
  $p^\ell$ is postfixed in the first $k$ coordinates, $p^r$ is prefixed in the first $k$ coordinates,
  and  one of the following three conditions hold: \vspace{0.15cm}
\begin{enumerate}
\item $p^{\ell}_{k+1}<f(p^{\ell})_{k+1}$; \vspace{0.1 cm}
\item $\smash{p^r_{k+1}>f(p^r)_{k+1}}$; or\vspace{0.1 cm}
\item $f(p^{\ell})_{k+1}-p^{\ell}_{k+1}=f(p^r)_{k+1}-p^r_{k+1}=0$.\vspace{0.15cm}
\end{enumerate}
While $\textsc{RefinedTarski}^*$ looks much stronger than $\Tarski^*$, surprisingly we show in \Cref{lemma: refinedtarski*}
  that it can be solved by a small number of calls to $\textsc{Tarski}^*$. 
With $\textsc{RefinedTarski}^*$ as the bridge, we are able to prove the new decomposition theorem
  and obtain the improved bound for $\Tarski$.

\begin{figure}[!tbp]
\centering
\begin{tikzpicture}
\node at (3,4) {\Cref{main theorem}};
\node at (5.7,3) {with \Cref{lemma: Tarski*-Tarski} (Algorithm \ref{alg: solve Tarski})};
\node at (3,2) {\Cref{lemma: Tarski* complexity}};
\node at (4.6,1) {with \Cref{thm: logn for 2-d}};
\node at (3,0) {\Cref{thm: decomposition}:};
\node at (11,0.7) {$\textsc{Tarski}^*(n,a)$};
\node at (11,-0.7) {$\textsc{RefinedTarski}^*(n,b)$};
\node at (15,-0.7) {$\textsc{Tarski}^*(n,b)$};
\node at (6,0) {$\textsc{Tarski}^*(n,a+b)$};
\node at (13.5,0) {\Cref{lemma: refinedtarski*} (Algorithm \ref{alg: solve RefinedTarski*})};
\node at (9,0) {Algorithm \ref{alg: solve Tarski*}};

\draw[stealth-] (3,3.7)--(3,2.3);
\draw[stealth-] (3,1.7)--(3,0.3);
\draw[stealth-] (7.5,0.1)--(8.8,0.7);
\draw[stealth-] (7.5,-0.1)--(8.8,-0.7);
\draw[stealth-] (12.7,-0.7)--(13.9,-0.7);
\end{tikzpicture}

\caption{A Proof Sketch}
\label{Proof Sketch}
\end{figure}
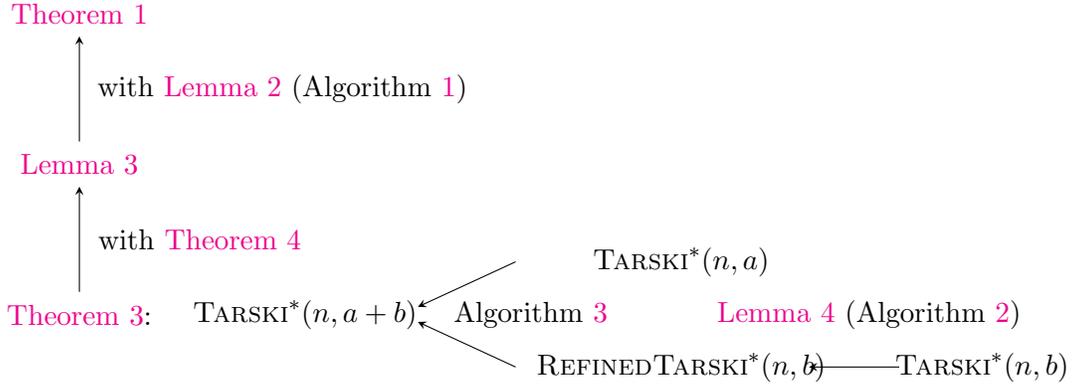

\def\LUB{\textsf{LUB}} \def\GLB{\textsf{GLB}}

\section{Preliminaries}

We start with the definition of \emph{monotone} functions and state Tarski's fixed point theorem:

\begin{definition}[Monotone functions]
	Let $(\calL,\preceq)$ be a complete lattice. A function $f: \calL\rightarrow \calL$ is said to be monotone
	if $f(a)\preceq f(b)$ for all $a,b\in \calL$ with $a\preceq b$.
\end{definition}

\begin{theorem}[Tarski]
	For any complete lattice $(\calL,\preceq)$ and any monotone function $f: \calL \rightarrow \calL$, there 
	  must be a point $x_0\in \mathcal{L}$ such that $f(x_0)=x_0$, i.e., $x_0$ is a fixed point.
\end{theorem}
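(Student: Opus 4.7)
The plan is to exhibit a fixed point as the supremum of the set of \emph{postfixed} points, which is the classical Knaster--Tarski argument. Define
\[
S \;=\; \{x \in \mathcal{L} : x \preceq f(x)\}.
\]
Since $(\mathcal{L},\preceq)$ is complete, it has a least element $\bot = \inf \mathcal{L}$, and trivially $\bot \preceq f(\bot)$, so $\bot \in S$ and $S$ is nonempty. By completeness, $x_0 := \sup S$ exists, and I claim $x_0$ is a fixed point of $f$. The argument then splits into showing the two inequalities $x_0 \preceq f(x_0)$ and $f(x_0) \preceq x_0$ separately.

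First I would establish $x_0 \preceq f(x_0)$. For any $x \in S$, we have $x \preceq x_0$, so monotonicity of $f$ gives $f(x) \preceq f(x_0)$; combined with $x \preceq f(x)$ this yields $x \preceq f(x_0)$. Hence $f(x_0)$ is an upper bound of $S$, and by definition of supremum $x_0 \preceq f(x_0)$, which in particular shows $x_0 \in S$.

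For the reverse inequality, I would apply monotonicity once more: from $x_0 \preceq f(x_0)$ we obtain $f(x_0) \preceq f(f(x_0))$, which says $f(x_0) \in S$. By definition of supremum, $f(x_0) \preceq x_0$. Combining the two inequalities yields $f(x_0) = x_0$.

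The proof is short enough that there is no serious obstacle; the only point that deserves attention is the invocation of completeness. We need that \emph{every} subset of $\mathcal{L}$, including $S$, admits a supremum, since $S$ may be arbitrarily large and has no a priori structure such as being a chain. This is precisely the feature of complete lattices that the theorem exploits, and it is why weaker order-theoretic hypotheses (e.g., chain completeness alone, without a monotonicity-based iteration argument) would not directly give this clean one-line construction.
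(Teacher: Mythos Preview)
Your argument is the standard Knaster--Tarski proof and is correct. The paper, however, does not prove this theorem at all: it is stated as background (with a citation to Tarski's 1955 paper) and used as a black box, so there is nothing to compare your approach against.
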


In this paper we work on the query complexity of $\Tarski(n,k)$, i.e., the problem of finding a Tarski fixed point over a $k$-dimen\-sional grid $([n]^k,\preceq)$, where $[n]$ denotes $\{1,2,\cdots,n\}$ and $\preceq$ denotes the natural partial order over $\mathbb{Z}^k$: $a\preceq b$ if and only if $ a_i\leq b_i$ for every $i\in [k]$. For $a,b \in \mathbb{Z}^k$ with $a\preceq b$, we write $\calL_{a,b}$ to denote the set of points $x\in \mathbb{Z}^k$ with $a\preceq x\preceq b$.
A point $x\in[n]^k$ is called a \emph{prefixed} point of $f$ if $f(x)\preceq x$; a point $x\in[n]^k$ is called a \emph{postfixed} point of $f$ if  $x\preceq f(x)$.

	Let $S\subseteq \mathbb{Z}^k$ be a finite set of points. 
	A point $p\in \mathbb{Z}^k$ is an \emph{upper bound} of $S$ if $x\preceq p$ for all $x\in S$.
	We say $p$ is the \emph{least upper bound} of $S$ if $p$ is an upper bound of $S$ and $p\preceq q$ for 
	every upper bound $q$ of $S$ (i.e., $p_i=\max_{x\in S} x_i$ for all $i\in [k]$).
	Similarly, a point $p\in \mathbb{Z}^k$ is a lower bound of $S$ if $p\preceq x$ for all $x\in S$.
	We say $p$ is the \emph{greatest lower bound} of $S$ if $p$ is a lower bound of $S$ and $q\preceq p$ for 
	every lower bound $q$ of $S$ (i.e., $p_i=\min_{x\in S} x_i$ for all $i\in [k]$).
	We write $\LUB(S)$ and $\GLB(S)$ to denote the least upper bound and the greatest lower bound
	of $S$, respectively.

We record the following simple fact:

\begin{fact}\label{fact: lub glb}
Let finite $S,T\subseteq \mathbb{Z}^k$ be such that   $x\preceq y$ for all $x\in S,y\in T$.
Then $\emph{\LUB}(S)\preceq \emph{\GLB}(T)$. 
\end{fact}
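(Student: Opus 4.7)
The plan is to prove the inequality coordinate by coordinate, using the explicit formulas for the LUB and GLB recalled just before the fact. Writing $p := \LUB(S)$ and $q := \GLB(T)$, the definitions give $p_i = \max_{x \in S} x_i$ and $q_i = \min_{y \in T} y_i$ for each $i \in [k]$, and both quantities exist because $S$ and $T$ are finite (and tacitly assumed nonempty, else the LUB and GLB are not defined).

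For each fixed $i \in [k]$, I would pick $x^\star \in S$ attaining the maximum in the definition of $p_i$, and $y^\star \in T$ attaining the minimum in the definition of $q_i$. Applying the hypothesis $x \preceq y$ to the particular pair $(x^\star, y^\star) \in S \times T$ yields $x^\star_i \le y^\star_i$, which is exactly $p_i \le q_i$. Since this holds for every coordinate $i$, the definition of the partial order on $\mathbb{Z}^k$ gives $\LUB(S) = p \preceq q = \GLB(T)$, as desired.

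There is no real obstacle here: the statement is essentially pure bookkeeping about coordinatewise maxima and minima. The only minor point to check is that the extrema are attained, which holds immediately because $S$ and $T$ are finite; and the only mild pitfall (if one wished to state the fact in more generality) would be handling empty $S$ or $T$, which is not an issue under the finiteness/nonemptiness convention implicit in the preliminaries.
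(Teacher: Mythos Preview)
Your proof is correct. The paper actually records this as a simple fact without supplying any proof, so your coordinatewise argument is exactly the routine verification one would expect and fills in what the paper leaves implicit.
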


We include a proof of the following simple lemma for completeness:

\begin{lemma}\label{lemma: FixedPinab}
Let $f: [n]^k\rightarrow [n]^k$ be a monotone function. 
Suppose $\ell,r\in [n]^k$ satisfy $\ell\preceq r$, $\ell\preceq f(\ell)$ and $f(r)\preceq r$.
Then $f$ maps $\mathcal{L}_{\ell,r}$ to itself and has a fixed point in $\calL_{\ell,r}$.
\end{lemma}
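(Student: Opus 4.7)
The plan is to verify the two claims in order: first that $f$ sends $\calL_{\ell,r}$ into itself, and then that this allows us to invoke Tarski's theorem to obtain a fixed point inside $\calL_{\ell,r}$.

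For the first claim, I would take an arbitrary $x \in \calL_{\ell,r}$, i.e., $\ell \preceq x \preceq r$, and apply monotonicity of $f$ twice to get $f(\ell) \preceq f(x) \preceq f(r)$. Combining this with the two hypotheses $\ell \preceq f(\ell)$ and $f(r) \preceq r$ via transitivity yields $\ell \preceq f(x) \preceq r$, so $f(x) \in \calL_{\ell,r}$. This is the standard chain-of-inequalities argument and should be essentially one line.

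For the second claim, the key observation is that $(\calL_{\ell,r}, \preceq)$ is itself a complete lattice: it is a finite nonempty subset of $\mathbb{Z}^k$ (nonempty because $\ell \preceq r$), and it is closed under coordinate-wise $\min$ and $\max$ since these operations preserve the sandwich $\ell \preceq \cdot \preceq r$. Restricted to $\calL_{\ell,r}$, the function $f$ remains monotone and, by the first part, maps this sublattice to itself. Tarski's fixed point theorem applied to $f|_{\calL_{\ell,r}}$ then produces a point $x_0 \in \calL_{\ell,r}$ with $f(x_0)=x_0$, which is also a fixed point of the original $f$ since $f$ and its restriction agree on $\calL_{\ell,r}$.

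Neither step is really a technical obstacle; the only subtlety is being explicit that $\calL_{\ell,r}$ inherits the complete-lattice structure from $[n]^k$ so that Tarski's theorem applies to the restriction, rather than trying to locate a fixed point directly (e.g., as the join of all postfixed points in $\calL_{\ell,r}$, which would work equally well but requires the same underlying observation). I would keep the write-up to a short paragraph for each of the two claims.
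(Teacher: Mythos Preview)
Your proposal is correct and matches the paper's own proof essentially line for line: the paper also writes the chain $\ell\preceq f(\ell)\preceq f(x)\preceq f(r)\preceq r$ and then invokes Tarski's theorem on $f$ restricted to $\calL_{\ell,r}$. The only difference is that you are slightly more explicit about $\calL_{\ell,r}$ being a complete lattice, which the paper leaves implicit.
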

\begin{proof}
For any $x\in \calL_{\ell,r}$, we have from $\ell\preceq x\preceq r$ that
  $$\ell\preceq f(\ell)\preceq f(x)\preceq f(r)\preceq r$$ and thus, $f(x)\in \calL_{\ell,r}$. 
The existence of a fixed point in $\calL_{\ell,r}$ follows directly from Tarski's fixed point theorem
  applied on $f$ over $\calL_{\ell,r}$. 
\end{proof}

\section{Reduction to $\textsc{Tarski}^*$}\label{sec:reducetotarski}

For convenience we focus on $\Tarski(n,k+1)$.
Our algorithm for $\Tarski(n,k+1)$ (see Algorithm \ref{alg: solve Tarski}) over a monotone function $f:[n]^{k+1}\rightarrow [n]^{k+1}$ will first~set 
$$\ell=1^{k+1}:= (1,\ldots,1)\quad\text{and}\quad r=n^{k+1}:=(n,\ldots,n)$$ and then proceed to
  find a point $x\in [n]^{k+1}$ with $x_{k+1}=\lceil n/2\rceil$ that is
  either prefixed ($f(x)\preceq x$)  or postfixed ($x\preceq f(x)$).
Note that such a point $x$ must exist since by Tarski's fixed point theorem,
  there must be a point $x$ with $x_{k+1}=\lceil n/2\rceil$ such that $f(x)_i=x_i$ for all $i\in [k]$ (a fixed point over the slice $x_{k+1}=\lceil n/2\rceil$), and such
  a point must be either prefixed or postfixed; on the other hand, it is crucial that 
  the algorithm is not required to find an $x$ with $f(x)_i=x_i$ for all $i\in [k]$ but just an $x$
  that is either prefixed or postfixed.
After finding $x$, the algorithm replaces $r$ by $x$ if $x$ is prefixed, or $\ell$ by $x$ if $x$ is postfixed.
It follows from \Cref{lemma: FixedPinab} that $f$ remains a monotone function from $\calL_{\ell,r}$
  to itself but one of the $k+1$ dimensions gets cut by one half.
The algorithm recurses on $\calL_{\ell,r}$.

The key subproblem is to find such a point $x$ with $x_{k+1}=\lceil n/2\rceil$ that is either prefixed or postfixed,
  which we formulate as the following problem called $\Tarski^*(n,k)$: 

\begin{definition}[{\sc Tarski*$(n,k)$}]\label{def: Tarski*}
Given oracle access to a function $g: [n]^k\rightarrow \{-1,0,1\}^{k+1}$ satisfying
	\begin{itemize}
		\item For all $ x\in[n]^k$ and $i\in[k]$, we have $x_i+g(x)_i\in [n]$; and
		\item For all $x,y\in [n]^k$ with $x\preceq y$, we have $(x,0)+g(x)\preceq (y,0)+g(y)$,
	\end{itemize}
	find a point $x\in [n]^k$ such that 
	  either $g(x)_i \leq 0$ for all $i\in [k+1]$ or $g(x)_i\geq 0$ for all $i\in [k+1]$.
\end{definition}

To see the connection between $\Tarski^*(n,k)$ and 
  the subproblem described earlier,
  one can define $g:[n]^k\rightarrow \{-1,0,1\}^{k+1}$ using $f:[n]^{k+1}\rightarrow [n]^{k+1}$ by letting,
  for each $x\in [n]^k$, 
  $$g(x)_{k+1}=\sgn\big(f(x,\lceil n/2\rceil)_{k+1}-\lceil n/2\rceil\big)\quad \text{and}\quad
  g(x)_i=\sgn\big(f(x,\lceil n/2\rceil)_i-x_i\big)$$ for each $i\in [k]$.
On the one hand, it is easy to verify that $g$ satisfies both conditions in \Cref{def: Tarski*} when
  $f$ is monotone.
On the other hand, every $x\in [n]^k$ with  
  $\{-1,1\}\not\subseteq \bigcup_{i\in[k+1]}\{g(x)_{i}\}$ must satisfy that
   $(x,\lceil n/2\rceil)$ is either prefixed or postfixed in $f$. 

The next lemma shows how to use an algorithm for  $\textsc{Tarski*}(n,k)$ to solve $\textsc{Tarski}(n,k+1)$.
\begin{lemma}\label{lemma: Tarski*-Tarski}
	If $\textsc{Tarski*}(n,k)$ can be solved in $q(n,k)$ queries, then $\textsc{Tarski}(n,k+1)$ can be solved in $O(2^k+ k\log n\cdot q(n,k) )$ queries.
\end{lemma}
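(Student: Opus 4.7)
The plan is an outer binary-search loop driven by the assumed $\textsc{Tarski}^*(n,k)$ oracle. Maintain two points $\ell \preceq r$ in $[n]^{k+1}$ with the invariants $\ell \preceq f(\ell)$ and $f(r) \preceq r$; initially $\ell = 1^{k+1}$ and $r = n^{k+1}$, which is trivially valid. By \Cref{lemma: FixedPinab}, these invariants force $f$ to map $\calL_{\ell,r}$ into itself with a fixed point inside. At each step I pick a coordinate $j \in [k+1]$ with $r_j - \ell_j$ maximal; if this largest range is at most $1$, then $|\calL_{\ell,r}| \le 2^{k+1} = O(2^k)$ and a fixed point can be found by querying every remaining point.

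Otherwise, set $m = \lfloor (\ell_j + r_j)/2 \rfloor$ and reduce the task of finding a prefixed or postfixed point in the slice $\{x : x_j = m\} \cap \calL_{\ell,r}$ to a single call of $\textsc{Tarski}^*(n,k)$. After relabeling, assume $j = k+1$. Define $g \colon [n]^k \to \{-1,0,1\}^{k+1}$ by first clamping the input $x$ to $\prod_{i \in [k]} [\ell_i, r_i]$ to obtain $\tilde x$, and then setting
\[
g(x)_i = \sgn\bigl(f(\tilde x, m)_i - \tilde x_i\bigr) \text{ for } i \in [k], \qquad g(x)_{k+1} = \sgn\bigl(f(\tilde x, m)_{k+1} - m\bigr).
\]
The two conditions of \Cref{def: Tarski*} must then be checked: the range condition $x_i + g(x)_i \in [n]$ follows because $f$ takes values in $[n]^{k+1}$ and the sign of $f(\tilde x,m)_i - \tilde x_i$ can only push $\tilde x_i$ strictly inside $[n]$; the monotonicity condition $(x,0)+g(x) \preceq (y,0)+g(y)$ for $x \preceq y$ follows from monotonicity of $f$ and of clamping, together with a short case analysis exploiting that clamping is $1$-Lipschitz (so the only potentially bad configuration $g(x)_i = 1$, $g(y)_i = -1$ with $y_i - x_i = 1$ forces $\tilde y_i - \tilde x_i \ge 2$, a contradiction). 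A solution $x$ returned by the oracle gives $y = (\tilde x, m) \in \calL_{\ell,r}$ that is either prefixed under $f$ (when $g(x) \le 0$ componentwise) or postfixed (when $g(x) \ge 0$); replace $r$ by $y$ in the former case and $\ell$ by $y$ in the latter.

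Each such update preserves the two invariants (in the prefixed case $f(y) \preceq y = r_{\mathrm{new}}$, while $\ell \preceq f(\ell)$ is unaffected; symmetric in the other case) and halves $r_j - \ell_j$, while no coordinate range ever grows since the new $r$ or $\ell$ already lies in $\calL_{\ell,r}$. Hence each of the $k+1$ coordinates can be halved at most $\lceil \log_2 n \rceil$ times before its range drops to $\le 1$, so the main loop runs for at most $(k+1)\lceil \log_2 n\rceil = O(k \log n)$ iterations, each costing $q(n,k)$ queries, plus the $O(2^k)$ brute force step, giving the claimed bound. The main delicacy I anticipate is the monotonicity verification for $g$ in the presence of clamping and the boundary case $g(x) = 0^{k+1}$, which corresponds to $y$ being a fixed point of $f$ on the slice and can be routed through either update branch without breaking the invariants.
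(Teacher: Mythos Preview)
Your proposal is correct and follows essentially the same outer-binary-search strategy as the paper's Algorithm~\ref{alg: solve Tarski}: maintain the postfixed/prefixed pair $(\ell,r)$, cut a long coordinate at its midpoint via one call to a $\textsc{Tarski}^*(n,k)$ solver, update $\ell$ or $r$, and brute-force when all side lengths are small. The only substantive difference is how the subgrid instance is fed to the $\textsc{Tarski}^*$ oracle: the paper translates $\calL_{L,R}$ into $[n]^k$ (without spelling out the extension), whereas you extend $g$ to all of $[n]^k$ by coordinatewise clamping and then verify \Cref{def: Tarski*} directly. One small correction to your justification of the range condition: when $x_i=1$ but $\ell_i>1$, you have $\tilde x_i=\ell_i>1$, so ``$f$ takes values in $[n]^{k+1}$'' alone does not force $g(x)_i\ge 0$; what you actually need is that $f$ maps $\calL_{\ell,r}$ into itself (your invariant via \Cref{lemma: FixedPinab}), which gives $f(\tilde x,m)_i\ge \ell_i=\tilde x_i$ at the clamped boundary. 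With that fix the argument goes through.
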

\begin{proof}
Suppose that $\calA$ is an algorithm for $\textsc{Tarski*}(n,k)$ with $q(n,k)$ queries. We present Algorithm \ref{alg: solve Tarski} and show that it can solve
  $\textsc{Tarski}(n,k+1)$ in $O(2^k+k\log n\cdot q(n,k))$ queries.

\textbf{Correctness.} The proof of correctness is based on the observation
  that $\ell\preceq r$, $\ell\preceq f(\ell)$ and $f(r)\preceq r$ at 
  the beginning of each while loop, which we prove below by induction. 
The basis is trivial. For the induction step, assume that 
  it holds at the beginning of the current while loop.
Then $f$ maps $\calL_{\ell,r}$ to itself and thus,
  $g$ satisfies both conditions in \Cref{def: Tarski*}.
As a result, $\calA$ can be used to find a point $q$ that is either prefixed
  or postfixed in $f$.
(Formally, one needs to embed $g$ over $\calL_{L,R}$ in
  the subgrid $\calL_{1^k,R-L}$ of $[n]^k$ and define $g':[n]^k\rightarrow \{-1,0,1\}^{k+1}$ such that solving $\Tarski^*(n,k)$ on $g'$ gives us $q$.)  
The way $\ell$ or $r$ is updated at the end of the loop
  makes sure that the statement holds at the beginning of the next loop.  
  
The last line of the algorithm makes sure that it returns a fixed point at the end.

\textbf{Query complexity.} Each while loop of Algorithm \ref{alg: solve Tarski}
  costs $q(n,k)$ queries.
	After each loop, 
	the side length of a dimension goes down by a factor of 2. So there are no more
	than $O(k \log n)$ rounds and thus, the query complexity of Algorithm \ref{alg: solve Tarski} is $O(2^k+ k\log n\cdot q(n,k))$.
\end{proof}

\begin{algorithm}[!t]
	\caption{Algorithm for $\textsc{Tarski}(n,k+1)$ via a reduction to $\Tarski^*(n,k)$}
	\label{alg: solve Tarski}
	\KwIn{Oracle access to a monotone function
		$f: [n]^{k+1}\rightarrow [n]^{k+1}$.
	}
	\KwOut{A fixed point $x\in [n]^{k+1}$ of $f$ with $f(x)=x$.
	}
	
	\BlankLine
	Let $\calA$ be an algorithm for $\textsc{Tarski*}(n,k)$. 
	Let $\ell=1^{k+1}$ and $r=n^{k+1}$. \label{solve tarski before while}\\
	
	\While{$|r-\ell|_\infty>2$}{
		Pick an $i\in [k+1]$ with $r_i-\ell_i>2$ and let 
		$$L=(\ell_1,\cdots,\ell_{i-1},\ell_{i+1},\cdots,\ell_{k+1}) 
		  \quad\text{and}\quad R=(r_1,\cdots,r_{i-1},r_{i+1},\cdots,r_{k+1}).$$\\
		Define a new function $g: \calL_{L ,R }\rightarrow \{-1,0,1\}^{k+1}$ as follows: 
		  $$g(x)\coloneqq \big(s_1,\cdots, s_{i-1}, s_{i+1},\cdots,s_{k+1},s_{i}\big)$$ 
		    where $s_j\coloneqq \sgn\left(f(x')_{j}-x'_{j}\right)$ and $x'=(x_1,\cdots,x_{i-1},\lceil (\ell_i+r_i)/2\rceil,x_{i},\cdots,x_{k})$.\\
		Run algorithm $\calA$ on $g$ to find a point $q\in \calL_{\ell,r}$
		  with $q_i= \lceil (\ell_i+r_i)/2\rceil$ that 
		  is either prefixed or postfixed;
		  set $r= q$ if $q$ is prefixed and 
		  set $\ell= q$ if $q$ is postfixed. 
	}
	Brute-force search $\calL_{\ell,r}$ to find a fixed point and return it.
\end{algorithm}

We prove the following upper bound for solving $\textsc{Tarski*}$ in the next section.
\begin{lemma}\label{lemma: Tarski* complexity}
	There is an $O(\log^{\lceil k/2 \rceil}n)$-query algorithm for $\textsc{Tarski*}(n,k)$.
\end{lemma}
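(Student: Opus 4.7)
The plan is to prove this lemma by induction on $k$, with \Cref{thm: decomposition} as the recursive engine and two separate base cases at $k=1$ and $k=2$.

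For the base case $k=2$, I would invoke \Cref{thm: logn for 2-d} directly, which supplies an $O(\log n)$-query algorithm for $\Tarski^*(n,2)$; since $\lceil 2/2\rceil=1$, this matches the target bound. For $k=1$, the input is a function $g:[n]\rightarrow\{-1,0,1\}^2$ with $g(x)_2$ a monotone non-decreasing step function of the single variable $x\in[n]$ (this follows from the second condition of \Cref{def: Tarski*} restricted to coordinate $k+1=2$). I would binary-search on $x$ to either (a) locate an $x$ with $g(x)_2=0$, in which case $g(x)$ already has the required sign pattern regardless of $g(x)_1$; or (b) locate the threshold where $g(\cdot)_2$ jumps from $-1$ to $+1$, after which a short case analysis on $g(x)_1$ at the two points adjacent to the threshold, together with the monotonicity of $(x,0)+g(x)$, forces one of the two to be a valid solution. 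This handles $k=1$ in $O(\log n)=O(\log^{\lceil 1/2\rceil}n)$ queries.

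For the inductive step with $k\ge 3$, I would write $k=2+(k-2)$ and apply \Cref{thm: decomposition} with $a=2$ and $b=k-2$. Combining the $O(\log n)$ base-case bound for $\Tarski^*(n,2)$ with the inductive hypothesis $q(n,k-2)=O(\log^{\lceil (k-2)/2\rceil}n)$ yields
\[
q(n,k)\;=\;O\!\bigl(q(n,2)\cdot q(n,k-2)\bigr)\;=\;O\!\bigl(\log^{\lceil (k-2)/2\rceil+1}n\bigr)\;=\;O\!\bigl(\log^{\lceil k/2\rceil}n\bigr),
\]
using the identity $\lceil (k-2)/2\rceil+1=\lceil k/2\rceil$, valid for all $k\ge 2$.

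The induction itself is entirely routine once the two assumed ingredients are in hand, so the main obstacles for the overall result lie elsewhere: specifically in \Cref{thm: decomposition}, whose proof via the bridge problem $\textsc{RefinedTarski}^*$ (and the nontrivial reduction of \Cref{lemma: refinedtarski*}) is the technical heart of the paper, and in the $O(\log n)$ base-case algorithm of \Cref{thm: logn for 2-d}. Given those, the present lemma is a short induction with essentially no slack, since the target exponent $\lceil k/2\rceil$ is exactly what a balanced split $k=2+(k-2)$ produces.
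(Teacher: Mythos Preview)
Your proposal is correct and follows essentially the same approach as the paper, which simply states that the lemma follows directly by combining \Cref{thm: decomposition} with \Cref{thm: logn for 2-d}. Your explicit treatment of the $k=1$ base case (needed for odd $k$ when recursing via $k\mapsto k-2$) fills in a detail the paper leaves implicit, and your argument there is correct.
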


\section{A Decomposition Theorem for $\textsc{Tarski*}$}
\label{sec: decomposition theorem}

The proof of \Cref{lemma: Tarski* complexity} 
   uses the following decomposition theorem for 
   $\textsc{Tarski*}$: 

\begin{theorem}\label{thm: decomposition}
	If $\textsc{Tarski*}(n,a)$ can be solved in $q(n,a)$ queries and $\textsc{Tarski*}(n,b)$ can be solved in $q(n,b)$ queries, then $\textsc{Tarski*}(n,a+b)$ can be solved in $O((b+1)\cdot q(n,a)\cdot q(n,b))$ queries.
\end{theorem}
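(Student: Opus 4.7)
The plan is a standard outer--inner decomposition: run an outer algorithm for $\Tarski^*(n,a)$ on the first $a$ coordinates of $[n]^{a+b}$, and answer each outer query by calling $\textsc{RefinedTarski}^*(n,b)$ on a sliced sub-instance in the last $b$ coordinates. Invoking $\textsc{RefinedTarski}^*$ rather than plain $\Tarski^*$ on the inner call is what pays for the $(b+1)$ overhead, via \Cref{lemma: refinedtarski*}, and is also what supplies the extra structure on the $(a+b+1)$-th sign coordinate needed to keep the outer instance monotone.

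First I would check that the slicing produces a legal inner instance. For each fixed $y\in[n]^a$ define $h_y:[n]^b\to\{-1,0,1\}^{b+1}$ by $h_y(z)_j=g(y,z)_{a+j}$ for $j\in[b]$ and $h_y(z)_{b+1}=g(y,z)_{a+b+1}$. Restricting the two conditions of \Cref{def: Tarski*} for $g$ to the slice $\{y\}\times[n]^b$ shows $h_y$ is a valid $\Tarski^*(n,b)$ instance. Restricting instead across slices $y\preceq y'$ in $[n]^a$ yields the following ``horizontal'' monotonicity, which will be the engine of the whole argument: $g(y,z)_i\leq g(y',z)_i$ for $i\in[a]$ and $h_y(z)_j\leq h_{y'}(z)_j$ for $j\in[b+1]$.

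At each outer query $y$, I invoke \Cref{lemma: refinedtarski*} on $h_y$ to obtain points $p^\ell(y)\preceq p^r(y)$ in $[n]^b$ satisfying one of the three coordinate-$(b+1)$ cases. I first test whether $(y,p^\ell(y))$ or $(y,p^r(y))$ already solves the original $\Tarski^*(n,a+b)$ instance (all $a+b+1$ signs of one polarity) and halt if so; otherwise I define $g'(y)\in\{-1,0,1\}^{a+1}$ by setting $g'(y)_{a+1}$ to the case label ($+1$ in case~1, $-1$ in case~2, $0$ in case~3) and reading the first $a$ coordinates off the corresponding witness: $g'(y)_i=g(y,p^\ell(y))_i$ in the $+1$ case, $g'(y)_i=g(y,p^r(y))_i$ in the $-1$ case, and in the $0$ case either choice is coherent because $g(y,p^\ell(y))_i\leq g(y,p^r(y))_i$ by inner monotonicity along $p^\ell(y)\preceq p^r(y)$. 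Running the outer algorithm $\calA_a$ on $g'$ then returns some $y^*$ with $g'(y^*)\succeq 0$ or $g'(y^*)\preceq 0$, and combining this polarity with the refined witness at $y^*$ produces the desired solution to the original $\Tarski^*(n,a+b)$ instance.

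The main obstacle I expect is verifying that $g'$ itself satisfies the monotonicity condition of \Cref{def: Tarski*}. The side condition $y_i+g'(y)_i\in[n]$ for $i\in[a]$ inherits directly from $g$. For the monotonicity across $y\preceq y'$, the horizontal inequalities above only give comparisons of $g$ evaluated at a \emph{common} $z$, so the real work is to relate the witnesses $p^\ell(y),p^r(y)$ to $p^\ell(y'),p^r(y')$---at minimum to show that the case labels are monotone in $y$. This is exactly the ambiguity flagged in \Cref{sec:sketch} as the reason plain $\Tarski^*$ fails: the three cases of $\textsc{RefinedTarski}^*$ are engineered so that, after a nine-way split on the labels at $y$ and $y'$, one can route the required coordinate-wise comparisons through a common point, using horizontal monotonicity on one axis and inner monotonicity through the sandwich $p^\ell\preceq p^r$ on the other. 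Once this is established, the query bound is immediate: the outer algorithm makes $q(n,a)$ queries, each spawning one inner call of $O((b+1)\cdot q(n,b))$ queries by \Cref{lemma: refinedtarski*}, for a total of $O((b+1)\cdot q(n,a)\cdot q(n,b))$.
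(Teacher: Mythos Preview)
Your outer--inner split is the right shape, but the monotonicity of $g'$ cannot be salvaged by a case split alone, and that is the genuine gap. The witnesses $p^\ell(y),p^r(y)$ you extract are outputs of a black-box algorithm applied to $h_y$; for two comparable outer queries $y\preceq y'$ there is no relation whatsoever between $p^\ell(y)$ and $p^\ell(y')$ (or between $p^r(y)$ and $p^r(y')$). Your ``route through a common point'' argument needs, say, $g(y,p^\ell(y))_i\le g(y',p^\ell(y'))_i$ for $i\in[a]$, but horizontal monotonicity only lets you move $y\to y'$ at a \emph{fixed} $z$, and inner monotonicity only lets you move $z\to z'$ when $z\preceq z'$; since $p^\ell(y)$ and $p^\ell(y')$ may be incomparable, there is no path. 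Concretely, even the case labels need not be monotone: one can have $h_y(p^\ell(y))_{b+1}=1$ while $h_{y'}(p^r(y'))_{b+1}=-1$ with $p^\ell(y)\not\preceq p^r(y')$, and nothing in the three-case guarantee of \textsc{RefinedTarski}$^*$ rules this out. The refined variant controls the extra sign \emph{within one slice}; it does not coordinate witnesses across slices.

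The paper supplies exactly the missing coordination mechanism. At outer round $i$ it does \emph{not} start the inner search from scratch; it initialises the inner box via $p^{(\ell,i)}=\LUB\{p^{(\ell,k)}:q^k\preceq q^i\}$ and $p^{(r,i)}=\GLB\{p^{(r,k)}:q^i\preceq q^k\}$, which forces $p^{(\ell,j_1)}\preceq p^{(\ell,j_2)}$ (resp.\ $p^{(r,j_1)}\preceq p^{(r,j_2)}$) whenever $q^{j_1}\preceq q^{j_2}$, and this is precisely what makes the analogue of your $g'$ monotone (\Cref{lemma: rmonotonicity}). Second, the $(b+1)$ factor does not come from \Cref{lemma: refinedtarski*} (that lemma gives only $O(q(n,k))$, not $O((k+1)q(n,k))$); it comes from calling \textsc{RefinedTarski}$^*$ once for \emph{each} of the $b+1$ coordinates outside the inner domain, so that after the for-loop every such coordinate is pinned down on the whole box $\calL_{p^{(\ell,i)},p^{(r,i)}}$ (\Cref{lemma: usage of RefinedTarski*}). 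Your single inner call pins down only the $(a{+}b{+}1)$-st sign and leaves the first $a$ signs of $g$ floating, which is why your ``either choice is coherent'' step in case~3 also breaks when you try to read off a solution.
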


We prove \Cref{thm: decomposition} in the rest of this section.
We also note that the algorithm of \cite{fearnley2020fasterarxiv} 
  can be used to solve the $2$-dimensional $\textsc{Tarski*}$ (see Theorem 14 in \cite{fearnley2020fasterarxiv}),
  even though they didn't define $\Tarski^*$ formally in the paper.
This leads to the following theorem about $\Tarski^*$:

\begin{theorem}[\cite{fearnley2020fasterarxiv}]
\label{thm: logn for 2-d}
	There is an $O(\log n)$-query algorithm for $\textsc{Tarski*}(n,2)$.
\end{theorem}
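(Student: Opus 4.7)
The plan is to observe that the ``inner'' $O(\log n)$-query subroutine used by Fearnley, P\'alv\"olgyi and Savani~\cite{fearnley2020fasterarxiv} inside their $O(\log^2 n)$ algorithm for $\textsc{Tarski}(n,3)$ already solves $\textsc{Tarski*}(n,2)$ as formalized in \Cref{def: Tarski*}. The first step is to make the equivalence of the two formulations explicit. Given any monotone $f:[n]^3 \rightarrow [n]^3$ and $m = \lceil n/2 \rceil$, defining $g(x) = \sgn\big(f(x, m) - (x, m)\big)$ for $x \in [n]^2$ yields a function $g:[n]^2 \rightarrow \{-1,0,1\}^3$ satisfying both conditions of \Cref{def: Tarski*}, and the desired output of $\textsc{Tarski*}(n,2)$ on $g$ is precisely a point $x \in [n]^2$ for which $(x, m)$ is prefixed or postfixed in $f$---the exact output produced by the FPS inner subroutine. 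Since that subroutine only inspects $f(\cdot, m)$ at points of $[n]^2$, and each such inspection corresponds to a single query to $g$, its $O(\log n)$ query bound transfers directly to $\textsc{Tarski*}(n,2)$.

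So the proof reduces to quoting Theorem 14 of \cite{fearnley2020fasterarxiv}. For context I would also sketch why the FPS inner algorithm achieves $O(\log n)$ rather than the $\Theta(\log^2 n)$ required by $\textsc{Tarski}(n,2)$. Three monotonicities are encoded in $g$ via the axioms of \Cref{def: Tarski*}: the maps $x \mapsto x_1 + g(x)_1$, $x \mapsto x_2 + g(x)_2$, and $x \mapsto g(x)_3$ are all non-decreasing on $[n]^2$. The subroutine maintains a sub-rectangle $\calL_{\ell, r} \subseteq [n]^2$ known to contain a valid $\textsc{Tarski*}$ solution---with endpoint invariants in the spirit of \Cref{lemma: FixedPinab}---and at each iteration queries a carefully selected point whose sign vector must allow some side length of $\calL_{\ell, r}$ to be halved. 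Because we are not required to find an exact fixed point of the 2D map $(x_1, x_2) \mapsto (x_1 + g(x)_1, x_2 + g(x)_2)$ but only a sign-consistent point, the extra monotone sign $g(\cdot)_3$ carries enough information that every possible sign vector shrinks the rectangle, producing a single $\log n$ factor rather than two.

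The main obstacle---already overcome in \cite{fearnley2020fasterarxiv}---is designing the probe point so that every possible outcome of the three signs simultaneously (i) halves some dimension of $\calL_{\ell, r}$ and (ii) preserves the invariant that a valid solution still lies inside; this combines Tarski-type arguments on restricted sub-lattices (as in \Cref{lemma: FixedPinab}) with the monotonicity of $g(\cdot)_3$ to rule out branches that would erase all solutions. After $O(\log n)$ iterations the rectangle has collapsed enough that the currently queried point itself satisfies the $\textsc{Tarski*}$ output condition, giving the claimed query bound.
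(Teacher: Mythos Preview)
Your proposal is correct and matches the paper's approach: the paper itself does not prove this theorem but simply attributes it to Theorem~14 of \cite{fearnley2020fasterarxiv}, noting that their inner algorithm already solves $\textsc{Tarski*}(n,2)$ even though they did not formally define the problem. Your write-up goes further than the paper by spelling out the reduction $g(x)=\sgn(f(x,m)-(x,m))$ and sketching the invariant-shrinking intuition, but the core content---quote Theorem~14 and observe the formulations coincide---is identical.
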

\Cref{lemma: Tarski* complexity}
 follows directly by combining \Cref{thm: decomposition} and \Cref{thm: logn for 2-d}. 


\subsection{A refined version of $\textsc{Tarski*}$}
We start the proof of our new decomposition theorem (\Cref{thm: decomposition}). 
To this end we first introduce a refined version of $\textsc{Tarski*}$.
\begin{definition}[$\textsc{RefinedTarski*}(n,k)$]\label{def: RefinedTarski*}
	Given a function $g:[n]^k \rightarrow \{-1,0,1\}^{k+1}$ satisfying
	\begin{itemize}
		\item For all $x\in[n]^k$ and $i\in[k]$, we have $x_i+g(x)_i\in [n]$; and
		\item For all $x,y\in [n]^k$ with $x\preceq y$, we have $(x,0)+g(x)\preceq (y,0)+g(y)$,
	\end{itemize}
	find a pair of points $p^\ell,p^r\in [n]^k$ such that $p^\ell\preceq p^r$,
	$$g(p^\ell)_{t}\ge 0\quad\text{and}\quad 
	g(p^r)_{t}\le 0,\quad \text{for all $t\in [k]$}$$ 
	and  one of the following conditions meets
	\begin{enumerate}
		\item $g(p^\ell)_{k+1}=1$;
		\item $g(p^r)_{k+1}=-1$; or
		\item $g(p^\ell)_{k+1}=g(p^r)_{k+1}=0$.
	\end{enumerate}
\end{definition}

We note that any solution $ p^\ell,p^r $ of $\textsc{RefinedTarski*}$ would imply a solution of corresponding $\textsc{Tarski*}$ problem directly by returning either $p^\ell$ or $p^r$. The following lemma shows that, in fact, 
   these two problems are computationally equivalent in their query complexity.

\begin{algorithm}[!t]
	\caption{Algorithm for $\textsc{RefinedTarski*}(n,k)$ via a reduction to $\Tarski^*(n,k)$}
	\label{alg: solve RefinedTarski*}
	\KwIn{Oracle access to $g: [n]^{k}\rightarrow \{-1,0,1\}^{k+1}$ 
	  that satisfies the conditions in \Cref{def: RefinedTarski*}.
	}
	\KwOut{
		A solution to $\textsc{RefinedTarski*}(n,k)$ on $g$.
	}
	
	\BlankLine
	Let $\calA$ be an algorithm for $\textsc{Tarski*}(n,k)$.
	Let $p^\ell= 1^k$ and $p^r= n^k$.\\
	Construct a new function $g^+: [n]^k\rightarrow \{-1,0,1\}^{k}\times\{-1,1\}$ as follows:
	\begin{equation*}
	\begin{cases} g^+(x)_i=g(x)_i,\quad \text{for all $i\in [k]$} \\[0.5ex]
	  \text{If $g(x)_{k+1}\geq 0$, then $g^+(x)_{k+1}=1$; if $g(x)_{k+1}=-1$, then $g^+(x)_{k+1}=-1$}\end{cases}
	\end{equation*}\\
	Run algorithm $\calA$ to find a solution $p^*$ to $\textsc{Tarski}^*(n,k)$ on $g^+$ over $[n]^k$.
	\label{callA 1}\\
	If $g^+(p^*)_{k+1}=1$, set $p^\ell\leftarrow p^*$; if $g^+(p^*)_{k+1}=-1$, set $p^r\leftarrow p^*$.\\
	If $g(p^*)_{k+1}\neq 0$, 
		\Return the pair of points $p^\ell,p^r$.\label{return}\\
	
	Construct a new function $g^-: [n]^k\rightarrow \{-1,0,1\}^{k}\times\{-1,1\}$ as follows:
	\begin{equation*}
	\begin{cases} 
	g^-(x)_i=g(x)_i,\quad \text{for all $i\in [k]$} \\[0.5ex]
	  \text{If $g(x)_{k+1}\leq 0$, then $g^-(x)_{k+1}=-1$; if $g(x)_{k+1}=1$, then $g^-(x)_{k+1}=1$}
	\end{cases}
	\end{equation*}\\
	Run algorithm $\calA$ to find a solution $q^*$ to $\textsc{Tarski}^*(n,k)$ on $g^-$ over $\calL_{p^{\ell},p^r}$. (This   can be done by embedding $g^-$ over $\calL_{p^{\ell},p^r}$ 
	inside $[n]^k$ and running $\calA$.)\label{callA 2}\\
	If $g^-(q^*)_{k+1}=1$, set $p^{\ell}\leftarrow q^*$; if $g^-(q^*)_{k+1}=-1$, set $p^r\leftarrow q^*$.\\
	\Return the pair of points $p^\ell,p^r$.
	
\end{algorithm}

\begin{lemma}\label{lemma: refinedtarski*}
	If $\textsc{Tarski*}(n,k)$ can be solved in $q(n,k)$ queries, then $\textsc{RefinedTarski*}(n,k)$ can be solved in $O(q(n,k))$ queries.
\end{lemma}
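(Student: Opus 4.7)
The plan is to solve $\textsc{RefinedTarski*}(n,k)$ with at most two calls to the $\Tarski^*$ algorithm $\calA$, each on a modification of $g$ whose $(k+1)$-th coordinate is forced to lie in $\{-1,1\}$. Concretely, I define $g^+$ to agree with $g$ on the first $k$ coordinates and set $g^+(x)_{k+1}=1$ when $g(x)_{k+1}\ge 0$ and $g^+(x)_{k+1}=-1$ when $g(x)_{k+1}=-1$; define $g^-$ analogously but sending $g(x)_{k+1}=0$ down to $-1$ instead of up to $1$. A routine check shows that $g^\pm$ satisfy the $\Tarski^*$ preconditions: the range condition is inherited from $g$ in the first $k$ coordinates, and the maps $\{-1,0,1\}\to\{-1,1\}$ used on the last coordinate are monotone, so $(x,0)+g(x)\preceq (y,0)+g(y)$ implies $(x,0)+g^\pm(x)\preceq (y,0)+g^\pm(y)$.

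I initialize $p^\ell=1^k$ and $p^r=n^k$; both automatically satisfy $g(p^\ell)_t\ge 0$ and $g(p^r)_t\le 0$ for $t\in [k]$ because the range condition $x_i+g(x)_i\in [n]$ forces nonnegative signs at the bottom corner and nonpositive signs at the top. The first call runs $\calA$ on $g^+$ and produces a point $p^*$ whose $(k+1)$-th sign under $g^+$ lies in $\{-1,1\}$; the $\Tarski^*$ solution property then forces either all signs of $g^+(p^*)$ to be $\ge 0$ (when $g^+(p^*)_{k+1}=1$) or all $\le 0$ (when $g^+(p^*)_{k+1}=-1$). If $g^+(p^*)_{k+1}=-1$ then $g(p^*)_{k+1}=-1$, and setting $p^r\leftarrow p^*$ yields condition~2. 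If $g^+(p^*)_{k+1}=1$ and $g(p^*)_{k+1}=1$, then setting $p^\ell\leftarrow p^*$ yields condition~1. The only unresolved scenario is $g^+(p^*)_{k+1}=1$ with $g(p^*)_{k+1}=0$; in that case I update $p^\ell\leftarrow p^*$ (so $g(p^\ell)_{k+1}=0$) and make a second call.

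The second call runs $\calA$ on $g^-$ over the subgrid $\calL_{p^\ell,p^r}$, embedded inside $[n]^k$ as in Algorithm~\ref{alg: solve Tarski}. Let $q^*$ be the returned point. If $g^-(q^*)_{k+1}=1$, then $g(q^*)_{k+1}=1$ and $g(q^*)_t\ge 0$ for all $t\in [k]$, so $p^\ell\leftarrow q^*$ gives condition~1. If $g^-(q^*)_{k+1}=-1$, then $g(q^*)_t\le 0$ for $t\in [k]$ and $g(q^*)_{k+1}\in\{-1,0\}$, so setting $p^r\leftarrow q^*$ produces condition~2 when $g(q^*)_{k+1}=-1$ and condition~3 when $g(q^*)_{k+1}=0$ (using that $g(p^\ell)_{k+1}=0$ was preserved from the first call). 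The total cost is $2q(n,k)+O(1)=O(q(n,k))$ queries.

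The step I expect to be the main obstacle is verifying that $g^-$ restricted to $\calL_{p^\ell,p^r}$ is still a legitimate $\Tarski^*$ instance, i.e.\ that $x_i+g^-(x)_i\in[p^\ell_i,p^r_i]$ holds on the boundary of the subgrid. This should reduce to a short monotonicity argument: at a lower-boundary point with $x_i=p^\ell_i$, monotonicity of $g$ forces $g(x)_i\ge g(p^\ell)_i\ge 0$, hence $g^-(x)_i\ge 0$, and symmetrically $g^-(x)_i\le 0$ whenever $x_i=p^r_i$. Once this is in place, the rest is pure case analysis on the value of $g(\cdot)_{k+1}$ at $p^*$ and $q^*$, and the whole argument rests on the fact that the first call to $\calA$ imprints exactly the sign structure on $p^\ell$ (together with the trivial $p^r=n^k$) needed to make the second call go through.
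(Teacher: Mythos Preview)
Your proposal is correct and follows essentially the same approach as the paper's proof: the same two modified functions $g^+,g^-$, the same initialization $p^\ell=1^k$, $p^r=n^k$, and the same second call restricted to $\calL_{p^\ell,p^r}$. One small difference: in your second call, when $g^-(q^*)_{k+1}=-1$, you split into the subcases $g(q^*)_{k+1}=-1$ (condition~2) and $g(q^*)_{k+1}=0$ (condition~3), whereas the paper observes that since $p^\ell\preceq q^*$ and $g(p^\ell)_{k+1}=0$, monotonicity forces $g(q^*)_{k+1}\ge 0$, so the $-1$ subcase is actually impossible and only condition~3 arises; your treatment is still correct, just slightly less sharp.
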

\begin{proof}
Suppose that $\calA$ is an algorithm to solve $\textsc{Tarski*}(n,k)$ with $q(n,k)$ queries. We show that 
    Algorithm \ref{alg: solve RefinedTarski*} will solve $\textsc{RefinedTarski*}$ in $O(q(n,k))$ queries.
	
\textbf{Correctness.}
	It is easy to verify that $g^+$ over $[n]^k$ satisfies both conditions of 
	\Cref{def: RefinedTarski*}. So
the point $p^*$ returned by algorithm $\calA$ on line \ref{callA 1} is either a prefixed or a postfixed point
  of $g^+$. If $g(p^*)_{k+1}=1$, then $p^\ell$ will be updated to $p^*$. By the definition of $\textsc{Tarski*}$ we have $g(p^*)_t\geq 0$ for all $t\in [k]$, which means $p^\ell,p^r$ will meet the first condition of $\textsc{RefinedTarski*}$. When $g(p^*)_{k+1}=-1$ $p^r$ will be updated to $p^*$ and $p^\ell,p^r$ will meet the second condition of $\textsc{RefinedTarski*}$.
	
	Now we can assume $g(p^*)_{k+1}=0$. So $p^\ell$ is updated to $p^*$ and $p^r$ remains $n^k$. Note that $p^*$ is a solution of $\textsc{Tarski*}$ under $g^+$ and $g^+(p^*)_{k+1}=1$ (because $g(p^*)_{k+1}=0$). So $g^+(p^*)_t\ge 0$ for all $t\in [k+1]$ and thus, $g(p^\ell)_t\geq 0$ for all $t\in [k]$. 
	
	Consider the point $q^*$ returned by algorithm $\calA$ on $g^-$ over $\calL_{p^\ell,p^r}$   on line \ref{callA 2}. If $g(q^*)_{k+1}=1$, then $p^\ell$ will be updated to $q^*$ and $p^l,p^r$ will meet the first condition of $\textsc{RefinedTarski*}$. Otherwise, we have $g(q^*)_{k+1}\leq 0$. Since $p^\ell\preceq q^*$, by the second property of function $g$, we know $0=g(p^\ell)_{k+1}\leq g(q^*)_{k+1}\leq 0$, i.e., $g(q^*)_{k+1}=0$. With the definition of $g^-$, we know that $g^-(q^*)=-1$. Note that $q^*$ is a solution to $\textsc{Tarski*}$ on $g^-$, so we have 
	  $g^-(q^*)_t\le 0$ for all $t\in [k+1]$. In this case, $p^r$ will be updated as $q^*$, so $p^\ell,p^r$ will meet the third condition of $\textsc{RefinedTarski*}$.

	\textbf{Query Complexity.} Algorithm \ref{alg: solve RefinedTarski*} just calls the algorithm $\calA$ at most two times on line \ref{callA 1} and line \ref{callA 2}, so the query complexity of Algorithm \ref{alg: solve RefinedTarski*} is $O(q(n,k))$.
\end{proof}

Now we are ready to prove \Cref{thm: decomposition}.

\begin{algorithm}[!t]
	\caption{Algorithm for $\textsc{Tarski*}(n,{a+b} )$ via $\Tarski^*(n,a)$ and 
	  $\Tarski^*(n,b)$}
	\label{alg: solve Tarski*}
	\KwIn{Oracle access to 
		$g: [n]^{a+b}\rightarrow \{-1,0,1\}^{a+b+1}$ satisfying conditions in \Cref{def: Tarski*}.
	}
	\KwOut{A solution to  $\textsc{Tarski*}(n,a+b)$ on $g$.
	}
	
	\BlankLine
	
	Let $\calA$ be an algorithm for $\textsc{Tarski*}(n,a)$  and $\calB$ be an algorithm for $\textsc{Tarski*}(n,b)$.\\
	Let $i \leftarrow 1$ be the round number. \\
	
	\Do{}{
	For each previous round $k\in[i-1]$, let $q^k\in [n]^b$ be the point queried by $\calB$ and $r^k\in \{-1,0,1\}^{b+1}$ be the answer.\\
		
		Given the sequence $((q^1,r^1),\cdots,(q^{i-1},r^{i-1}))$, let 
		  $q^i\in [n]^{b}$ be the $i$th query of $\calB$.\\

		Set (when $i=1$, set $p^{(\ell,1)}=1^a$ and $p^{(r,1)}=n^a$)
		\begin{align*}p^{(\ell,i)}&\leftarrow \LUB\left(\left\{p^{(\ell,k)}:k\in [i-1] \ \text{and}\ q^k\preceq q^i\right\}\right)\quad\text{and}\\[0.5ex]
		p^{(r,i)}&\leftarrow \GLB\left(\left\{p^{(r,k)}:k\in [i-1] \ \text{and}\  q^i\preceq q^k\right\}\right)
		\end{align*}\label{define pr}\\
		\For{each $j$ from $a+1$ to $a+b+1$}{\label{for loop}
			Define a new function $g_{j}: [n]^{a} \rightarrow \{-1,0,1\}^{a+1}$ as follows:
			$$ g_{j}(x)=\Big(g(x,q^i)_1,\cdots,g(x,q^i)_a,g(x,q^i)_j\Big),
			\quad \text{for every $x\in [n]^{a}$.}$$\\
			Run Algorithm \ref{alg: solve RefinedTarski*} with $\calA$ 
			  to find a solution 
			  $p^{(\ell,*)},p^{(r,*)}$ to $\textsc{RefinedTarski}^*(n,a)$
			  on $g_j$ over $\smash{\calL_{p^{(\ell,i)},p^{(r,i)}}}$
			  \label{find pl pr};
			set $p^{(\ell,i)}\leftarrow p^{(\ell,*)}$ and  $p^{(r,i)}\leftarrow p^{(r,*)}$.
			}

		Construct $r^i\in \{-1,0,1\}^{b+1}$ as the query result to $q^i$: $$r^i_{t-a}=g(p^{(\ell,i)},q^i)_t,\quad\text{for each $t\in[a+b+1]\setminus [a]$}.$$\\ 
		
		If $r_t^i\ge 0$ for all $t\in [b+1]$, 
		  \Return $(p^{(\ell,i)},q^i)$.\\
		
		If $r^i_t\le 0$ for all $t\in [b+1]$, 
		\Return $(p^{(r,i)},q^i)$.
	}
	
\end{algorithm}

\begin{proof}[Proof of \Cref{thm: decomposition}]
	Suppose $\calA$ is a query algorithm to solve $\textsc{Tarski*}(n,a)$ in $q(n,a)$ queries and $\calB$ is a query algorithm to solve $\textsc{Tarski*}(n,b)$ in $q(n,b)$ queries. We will show that Algorithm \ref{alg: solve Tarski*} can solve $\textsc{Tarski*}(n,a+b)$ in $O((b+1)\cdot q(n,a)\cdot q(n,b))$ queries. 
	
\def\hhh{h}	
	
\textbf{Overview.}
At a high level, Algorithm \ref{alg: solve Tarski*}   
  will run $\calB$ for $\Tarski^*(n,b)$ on a function $\hhh: [n]^{b}\rightarrow
  \{-1,0,1\}^{b+1}$ built on the go using the input $g:[n]^{a+b}
  \rightarrow \{-1,0,1\}^{a+b+1}$ of $\Tarski^*(n,a+b)$ (that satisfies the conditions in 
  \Cref{def: Tarski*}).
Let $q^1,\ldots,q^{i-1}\in [n]^b$ be the $i-1$ queries that 
  $\calB$ has made so far, for some $i\ge 1$, and 
  let $r^1,\ldots,r^{i-1}\in \{-1,0,1\}^{b+1}$ be the query results
  on $\hhh$.
Let $q^i\in [n]^b$ be the new query made by $\calB$ in the $i$th round.
Our challenge is to use $g$ (its restriction on points with the last
  $b$ coordinates being $q^i$) to come up with an $r^i\in \{-1,0,1\}^{b+1}$ 
  as the answer $h(q^i)$ to the query
  such that
\begin{flushleft}\begin{enumerate} \item
\Cref{lemma: rmonotonicity}: All results $(q^1,r^1),\ldots,(q^i,r^i)$ are consistent with the conditions of \Cref{def: Tarski*}, i.e., 
  $q^j_i+r^j_i\in [n]$ for all $i\in [b]$ and  
  $(q^j,0)+r^j\preceq (q^{j'},0)+r^{j'}$ for all $j,j'$ with $q^j\preceq q^{j'}$; and 
  \item \Cref{lemma: rsolution}: When $q^i$ is a solution to $\Tarski^*(n,b+1)$ on $h$,
  i.e., either $r^i_t\ge 0$ for all $t\in [b+1]$ or $r^i_t\le 0$ for all $t\in [b+1]$,
  we can use $q^i$ to obtain a solution to $\Tarski^*(n,a+b)$ on 
  the original input function $g$.
\end{enumerate}\end{flushleft}

To obtain $r^i$, we need to run $\calA$ $b+1$ times
  to obtain a pair of points $p^{(\ell,i)},p^{(r,i)}\in [n]^a$ and 
  use $g(p^{(\ell,i)},q^i)$ and $g(p^{(r,i)},q^i)$ to determine $r^i$.
A crucial component in the computation of $p^{(\ell,i)}$ and $p^{(r,i)}$
  is to initialize the search space using 
  pairs $p^{(\ell,j)},p^{(r,j)}$, $j\in [i-1]$, from previous rounds.

	\textbf{Correctness.} 
	We prove a sequence of lemmas about Algorithm \ref{alg: solve Tarski*}:

\begin{lemma}\label{hehehehe}
At the end of each round $i$, 
we have $p^{(\ell,i)}\preceq p^{(r,i)}$ and
\begin{equation}
\label{induc}
g(p^{(\ell,i)},q^i)_t\ge 0\quad\text{and}\quad
g(p^{(r,i)},q^i)_t\le 0,\quad \text{for all $t\in [a]$}.
\end{equation}
\end{lemma}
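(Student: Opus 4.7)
The plan is to prove the lemma by induction on the round number $i$, but with a strengthened inductive hypothesis that tracks comparability across rounds. Specifically, the invariant I will carry is: at the end of round $i$, in addition to $p^{(\ell,i)} \preceq p^{(r,i)}$ and \eqref{induc}, for every pair of rounds $k,k' \le i$ with $q^k \preceq q^{k'}$ one has $p^{(\ell,k)} \preceq p^{(r,k')}$. The diagonal case $k=k'=i$ recovers the lemma, and carrying the off-diagonal comparability is what makes the LUB/GLB initialization in the next round produce a non-empty subgrid.

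For the base case $i=1$, the initialization sets $p^{(\ell,1)} = 1^a \preceq n^a = p^{(r,1)}$, so the subgrid is non-empty before the for loop. Each iteration on $j = a+1,\ldots,a+b+1$ invokes \Cref{lemma: refinedtarski*} on $g_j$ over $\calL_{p^{(\ell,1)},p^{(r,1)}}$. A direct check shows that each $g_j$ inherits the two defining properties of \Cref{def: RefinedTarski*} from the monotonicity of $g$, so \Cref{lemma: refinedtarski*} returns a pair $(p^{(\ell,*)},p^{(r,*)})$ inside the subgrid with $p^{(\ell,*)} \preceq p^{(r,*)}$ and $g_j(p^{(\ell,*)})_t \ge 0$, $g_j(p^{(r,*)})_t \le 0$ for all $t \in [a]$. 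Since the first $a$ coordinates of $g_j$ coincide with those of $g(\cdot, q^1)$, these inequalities are exactly \eqref{induc}. The updates $p^{(\ell,1)} \leftarrow p^{(\ell,*)}$ and $p^{(r,1)} \leftarrow p^{(r,*)}$ move $p^{(\ell,1)}$ only up and $p^{(r,1)}$ only down, so the conditions are preserved throughout the loop.

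For the inductive step, at the start of round $i$ the algorithm sets $p^{(\ell,i)} = \LUB\{p^{(\ell,k)} : k<i, q^k \preceq q^i\}$ and $p^{(r,i)} = \GLB\{p^{(r,k')} : k'<i, q^i \preceq q^{k'}\}$. For any $k$ and $k'$ drawn from these two sets, $q^k \preceq q^i \preceq q^{k'}$ implies $q^k \preceq q^{k'}$, so by the strengthened inductive hypothesis $p^{(\ell,k)} \preceq p^{(r,k')}$. \Cref{fact: lub glb} then gives $p^{(\ell,i)} \preceq p^{(r,i)}$ at the start of the round, so the subgrid is non-empty. The for loop then proceeds exactly as in the base case and yields \eqref{induc} together with $p^{(\ell,i)} \preceq p^{(r,i)}$ at the end of round $i$.

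To close the strengthened induction for a cross-round pair $(i,k')$ with $k' < i$ and $q^i \preceq q^{k'}$, note that the end-of-round $p^{(\ell,i)} \preceq p^{(r,i)}$ (just established), the end-of-round $p^{(r,i)}$ is $\preceq$ its start-of-round value (monotone narrowing during the for loop), and the start-of-round $p^{(r,i)}$ is $\preceq p^{(r,k')}$ by the definition of GLB; a symmetric chain handles pairs $(k,i)$ with $q^k \preceq q^i$. The main obstacle is precisely the realization that round $i$ cannot be analyzed in isolation: the LUB/GLB initialization silently assumes the cross-round comparability $p^{(\ell,k)} \preceq p^{(r,k')}$ from previous rounds, so this must be folded into the inductive hypothesis. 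Once this strengthening is in place, the rest of the argument reduces to applying \Cref{fact: lub glb} and the correctness guarantees of \Cref{lemma: refinedtarski*}.
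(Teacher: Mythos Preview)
Your overall approach matches the paper's: induction on the round $i$, with the key step being the cross-round comparability $p^{(\ell,j_1)}\preceq p^{(r,j_2)}$ whenever $q^{j_1}\preceq q^{j_2}$, which is what makes the $\LUB/\GLB$ initialization yield a nonempty subgrid. You package this comparability into a strengthened inductive hypothesis, whereas the paper keeps the hypothesis minimal and re-derives the comparability inside each inductive step (its Cases 0/1/2); these are equivalent organizations and your closing of the strengthened hypothesis via the monotone-narrowing chain is correct.

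There is, however, one genuine gap. In the inductive step you only establish $p^{(\ell,i)}\preceq p^{(r,i)}$ for the $\LUB/\GLB$-initialized values and then write ``the for loop then proceeds exactly as in the base case.'' But for that to be true you also need \eqref{induc} to hold \emph{before} the for loop, i.e.\ for the freshly initialized $p^{(\ell,i)},p^{(r,i)}$. Without it, $g_j$ over $\calL_{p^{(\ell,i)},p^{(r,i)}}$ need not satisfy the boundary condition of \Cref{def: RefinedTarski*}, so the very first call to Algorithm~\ref{alg: solve RefinedTarski*} on line~\ref{find pl pr} is not a priori a valid instance. In the base case this held for free because $p^{(\ell,1)}=1^a$ and $p^{(r,1)}=n^a$ sit on the global boundary of $[n]^a$; in the inductive step it is not automatic. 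The paper supplies exactly this missing piece: for each coordinate $t\in[a]$ it picks a round $j^*<i$ with $q^{j^*}\preceq q^i$ and $p^{(\ell,j^*)}_t=p^{(\ell,i)}_t$ (such $j^*$ exists because $p^{(\ell,i)}$ is a coordinatewise maximum), and then uses monotonicity of $g$ together with the inductive hypothesis $g(p^{(\ell,j^*)},q^{j^*})_t\ge 0$ to conclude $g(p^{(\ell,i)},q^i)_t\ge 0$; the $p^{(r,i)}$ side is symmetric. Add this argument and your proof is complete.
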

	\begin{proof}
	We start with the base case for the first round.
    We have $p^{(\ell,1)}=1^{a}\preceq n^{a}=p^{(r,1)}$ at the beginning
      so \Cref{induc} holds. 
    It is easy to prove by induction that both $p^{(\ell,1)}\preceq p^{(r,1)}$ and \Cref{induc} hold
      at the beginning of each for loop on
      line \ref{for loop}, 
      $g_j$ over $\smash{\calL_{p^{(\ell,1)},p^{(r,1)}}}$ satisfies 
        conditions of $\Tarski^*$ during the for loop and thus,
      both $\smash{p^{(\ell,1)}\preceq p^{(r,1)}}$ and \Cref{induc} hold 
      at the end of the for loop. 
This shows that both of them hold at the end of the first main loop.
      
The induction step is similar. 
Assume that both conditions hold for $p^{(\ell,j)},p^{(r,j)}$ for
  $j\in [i-1]$.      
We start by showing that 
  $p^{(\ell,i)}, p^{(r,i)}$ on line \ref{define pr} satisfy both 
  $p^{(\ell,i)}\preceq p^{(r,i)}$ and \Cref{induc}.
   
To prove $p^{(\ell,i)}\preceq p^{(r,i)}$, we  
make the following observation: 
  $p^{(\ell,j_1)}\preceq p^{(r,j_2)}$
   for all $j_1,j_2\in [i-1]$ with $q^{j_1}\preceq q^{j_2}$.
We divide the proof into three cases:
\begin{flushleft}\begin{enumerate}
\item[] \textbf{Case 0:} $j_1=j_2=j$.
Trivially follows from $p^{(\ell,j)}\preceq p^{(r,j)}$.		
\item[] \textbf{Case 1:} $j_1<j_2$.  By the inductive hypothesis, we know $p^{(\ell,j_2)}\preceq p^{(r,j_2)}$. Considering the while loop $j_2$, by the definition on line \ref{define pr} and $q^{j_1}\preceq q^{j_2}$, we know $p^{(\ell,j_1)}\preceq p^{(\ell,j_2)}$ before the loop on line \ref{for loop}. Furthermore, by the updating rule on line \ref{find pl pr} , we know that $p^{(\ell,j_2)}$ is monotonically non-decreasing, which means $p^{(\ell,j_1)}\preceq p^{(\ell,j_2)}$ holds after while loop $j_2$. So we can derive that after while loop $j_2$, $p^{(\ell,j_1)}\preceq p^{(r,j_2)}$.
		
		\item[]\textbf{Case 2:} $j_1>j_2$.  By the inductive hypothesis, we know $p^{(\ell,j_1)}\preceq p^{(r,j_1)}$. Considering the while loop $j_1$, by the definition on line \ref{define pr} and $q^{j_1}\preceq q^{j_2}$, we know $p^{(r,j_1)}\preceq p^{(r,j_2)}$ before the loop on line \ref{for loop}. Furthermore, by the updating rule on line \ref{find pl pr}, we know that $p^{(r,j_1)}$ is monotonically non-increasing, which means $p^{(r,j_1)}\preceq p^{(r,j_2)}$ holds after while loop $j_1$. So we can derive that after while loop $j_1$, $p^{(\ell,j_1)}\preceq p^{(r,j_2)}$.
		\end{enumerate}\end{flushleft}
		Now we move back to our proof of $p^{(\ell,i)}\preceq p^{(r,i)}$. For every $j_1,j_2\in [i-1]$ such that $q^{j_1}\preceq q^i\preceq q^{j_2}$, we know that $p^{(\ell,j_1)}\preceq p^{(r,j_2)}$. So the same partial order relation of the least upper bound of $p^{(\ell,j_1)}$ and the greatest lower bound of $p^{(r,j_2)}$ also holds, i.e., $p^{(\ell,i)}\preceq p^{(r,i)}$ before the loop on line \ref{for loop}. 
		
Next we prove \Cref{induc}	on line \ref{define pr}.	
	For each $t\in[a]$, given that $p^{(\ell,i)}$ is the $\LUB$,
	there must exist $j^*\in[i-1]$ such that 
$$
q^{j^*}\preceq q^i,\quad
	p^{(\ell,j^*)}\preceq p^{(\ell,i)}\quad \text{and}\quad 
	p^{(\ell,j^*)}_t=p^{(\ell,i)}_t.$$ Since $(p^{(\ell,j^*)},q^{j^*})\preceq (p^{(\ell,i)},q^i)$, we have 
$$p^{(\ell,j^*)}_t+g(p^{(\ell,j^*)},q^{j^*})_t\preceq 
p^{(\ell,i)}_t+
g(p^{(\ell,i)},q^i),$$ 
which implies $g(p^{(\ell,j^*)},q^{j^*})_t\leq g(p^{(\ell,i)},q^i)_t$.
On the other hand, we have
  $g(p^{(\ell,j^*)},q^{j^*})_t\ge 0$ by the inductive hypothesis.
So $ g(p^{(\ell,i)},q^{i})_t\ge 0$. 
$ g(p^{(r,i)},q^{i})_t\le 0$ can be proved similarly.

Given that both $p^{(\ell,i)}\preceq p^{(r,i)}$ and \Cref{induc}
  hold at the beginning of the loop on 	
	line \ref{for loop}, the rest of the proof is essentially the same as the proof 
    in the base case.
It is easy to prove by induction that both $p^{(\ell,i)}\preceq p^{(r,i)}$ and \Cref{induc} hold
      at the beginning of each for loop on
      line \ref{for loop}, 
      $g_j$ over $\smash{\calL_{p^{(\ell,i)},p^{(r,i)}}}$ satisfies 
        conditions of $\Tarski^*$ during the for loop and thus,
      both $\smash{p^{(\ell,i)}\preceq p^{(r,i)}}$ and \Cref{induc} hold 
      at the end of the for loop. 
This shows that both of them hold at the end of the main while loop.

		This completes the induction and the proof of the lemma. 
	\end{proof}

	\begin{lemma}\label{lemma: usage of RefinedTarski*}
		At the end of every round $i$, we have
		$g(p_1,q^i)_t=g(p_2,q^i)_t$ for all
		  $p_1,p_2\in\calL_{p^{(\ell,i)},p^{(r,i)}}$ and all
		 $t\in[a+b+1]\setminus [a]$,
	\end{lemma}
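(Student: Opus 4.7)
The plan is to exploit the monotonicity of $g(\cdot,q^i)$ along the first $a$ coordinates, combined with the guarantee from \textsc{RefinedTarski*}, to force each of the ``extra'' coordinates $t\in\{a+1,\ldots,a+b+1\}$ to be \emph{constant} on the shrinking box $\calL_{p^{(\ell,i)},p^{(r,i)}}$, one coordinate per iteration of the for loop on line \ref{for loop}.

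The first step is a short monotonicity observation. Applying the second bullet of \Cref{def: Tarski*} to any pair $(p_1,q^i)\preceq (p_2,q^i)$ with $p_1\preceq p_2$ in $[n]^a$, the offsets in coordinates $t\in\{a+1,\ldots,a+b\}$ are the same $q^i_{t-a}$ on both sides and cancel, while coordinate $a+b+1$ carries no offset at all, so I obtain
\[
g(p_1,q^i)_t \;\le\; g(p_2,q^i)_t \qquad \text{for every } t\in\{a+1,\ldots,a+b+1\}.
\]
Hence each of these coordinates, viewed as a function of $p\in[n]^a$, is monotone non-decreasing.

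The core step is induction on $j$ from $a+1$ up to $a+b+1$ with the invariant: at the end of iteration $j$, the coordinate $g(\cdot,q^i)_j$ is constant on the current $\calL_{p^{(\ell,i)},p^{(r,i)}}$. In iteration $j$ the algorithm calls \textsc{RefinedTarski*} on $g_j$, whose last coordinate is exactly $g(\cdot,q^i)_j$, and receives a pair $p^{(\ell,*)}\preceq p^{(r,*)}$ falling into one of the three cases of \Cref{def: RefinedTarski*}. If $g(p^{(\ell,*)},q^i)_j=1$, then monotonicity pins $g(p,q^i)_j=1$ for every $p\succeq p^{(\ell,*)}$; if $g(p^{(r,*)},q^i)_j=-1$, it pins the value to $-1$ for every $p\preceq p^{(r,*)}$; and if both endpoints equal $0$, monotonicity sandwiches the value to $0$ on the whole interval. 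In each case, after the update on line \ref{find pl pr} the coordinate $g(\cdot,q^i)_j$ is constant on the new box $\calL_{p^{(\ell,*)},p^{(r,*)}}$.

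The one thing to be careful with is preservation of constancy across later iterations: after iteration $j$, subsequent iterations $j'>j$ only \emph{shrink} the box (since \textsc{RefinedTarski*} always returns a pair inside its input box), and constancy on a larger box trivially implies constancy on a smaller one. Thus the invariant accumulates over all processed indices simultaneously, and after the final iteration $j=a+b+1$ we get exactly the statement of the lemma. I do not anticipate a serious obstacle beyond correctly tracking this nesting of boxes and the cancellation of the offset $q^i_{t-a}$ in the middle block of coordinates.
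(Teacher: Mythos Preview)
Your proposal is correct and follows essentially the same approach as the paper's own proof: a case analysis on the three \textsc{RefinedTarski*} outcomes at iteration $j$ showing that monotonicity of $g(\cdot,q^i)_j$ pins the value to a constant on the updated box, together with the observation that later iterations only shrink the box so the constancy persists. Your explicit justification of the monotonicity of the extra coordinates via the offset cancellation is a detail the paper leaves implicit, but otherwise the arguments coincide.
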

	\begin{proof}
		Consider the end of round $t$ of the for loop on line \ref{for loop}. If $g(p^{(\ell,*)},q^i)_{t}=1$, then for every $p^{(\ell,*)}\preceq p\preceq p^{(r,*)}$, we have $1=g(p^{(\ell,*)},q^i)_{t}\leq g(p,q^i)_{t}$, i.e., $g(p,q^i)_{t}=1$. Similarly, if $g(p^{(r,*)},q^i)_{t}=-1$, then we~have 
		$g(p,q^i)_{t}=-1$
		for every $p^{(\ell,*)}\preceq p\preceq p^{(r,*)}$. For the last case of $g(p^{(\ell,*)},q^i)_{t}= g(p^{(r,*)},q^i)_{t}=0$, for every $p^{(\ell,*)}\preceq p\preceq p^{(r,*)}$, we have $0=g(p^{(\ell,*)},q^i)_{t}\leq g(p,q^i)_{t}\leq g(p^{(r,*)},q^i)_{t}=0$, i.e., $g(p,q^i)_{t}=0$.

		For subsequent round $t+1,t+2,\cdots$ of the for loop, we know that $\calL_{p^{(\ell,i)},p^{(r,i)}}$ can only shrink. 
		So the property remains. This finishes the proof of the lemma.
	\end{proof}
	
We are now ready to prove the two lemmas needed for the correctness of Algorithm  \ref{alg: solve Tarski*}:
	
	\begin{lemma}\label{lemma: rmonotonicity}
		For any two rounds $j_1$ and $j_2$, if $q^{j_1}\preceq q^{j_2}$, then $(q^{j_1},0)+r^{j_1}\preceq (q^{j_2},0)+r^{j_2}$.
	\end{lemma}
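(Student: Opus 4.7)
The key observation is that Lemma~\ref{hehehehe} already gives us, for \emph{all} pairs of rounds $j_1,j_2$ with $q^{j_1}\preceq q^{j_2}$, the inequality $p^{(\ell,j_1)}\preceq p^{(r,j_2)}$ (this is exactly the ``observation'' whose three cases are analyzed in the proof of that lemma, covering $j_1=j_2$, $j_1<j_2$, and $j_1>j_2$ respectively). So I would use this comparable pair as the ``bridge'' between rounds $j_1$ and $j_2$, without any case analysis on which round came first.

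\textbf{Main steps.} First, set $x=(p^{(\ell,j_1)},q^{j_1})$ and $y=(p^{(r,j_2)},q^{j_2})$ in $[n]^{a+b}$. From $p^{(\ell,j_1)}\preceq p^{(r,j_2)}$ and the hypothesis $q^{j_1}\preceq q^{j_2}$, we get $x\preceq y$. Since $g$ satisfies the monotonicity condition in Definition~\ref{def: Tarski*}, we have $(x,0)+g(x)\preceq (y,0)+g(y)$ in $[n]^{a+b}\times \mathbb Z$. Restricting this chain of inequalities to the coordinates indexed by $t\in [a+b+1]\setminus [a]$ (i.e.\ the last $b+1$ entries) and writing $s=t-a\in[b+1]$ yields, for $s\in[b]$, that $q^{j_1}_s+g(x)_{s+a}\le q^{j_2}_s+g(y)_{s+a}$, and for $s=b+1$ that $g(x)_{a+b+1}\le g(y)_{a+b+1}$.

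Second, I would identify these $g$-values with the corresponding entries of $r^{j_1}$ and $r^{j_2}$. By the definition of $r^i$ on the last line of Algorithm~\ref{alg: solve Tarski*}, we have $r^{j_1}_s=g(p^{(\ell,j_1)},q^{j_1})_{s+a}=g(x)_{s+a}$ for each $s\in[b+1]$. For round $j_2$ the definition uses $p^{(\ell,j_2)}$ rather than $p^{(r,j_2)}$, but Lemma~\ref{lemma: usage of RefinedTarski*} says that for every coordinate $t\in[a+b+1]\setminus[a]$ the value $g(\,\cdot\,,q^{j_2})_t$ is constant on $\calL_{p^{(\ell,j_2)},p^{(r,j_2)}}$, so $g(y)_{s+a}=g(p^{(r,j_2)},q^{j_2})_{s+a}=g(p^{(\ell,j_2)},q^{j_2})_{s+a}=r^{j_2}_s$. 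Substituting these identifications into the chain of inequalities from the first step gives $(q^{j_1},0)+r^{j_1}\preceq (q^{j_2},0)+r^{j_2}$, as desired.

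\textbf{Main obstacle.} There is no serious obstacle once one spots the right bridge point: the potential worry is that the definition of $r^i$ only refers to $p^{(\ell,i)}$, so a naive attempt would try to relate $p^{(\ell,j_1)}$ and $p^{(\ell,j_2)}$ directly. This is awkward because in the case $j_1>j_2$ one has no natural comparison between these two $\ell$-points (only $p^{(r,j_1)}\preceq p^{(r,j_2)}$ comes out of the previous lemma's argument). The way around it is precisely Lemma~\ref{lemma: usage of RefinedTarski*}, which frees us to evaluate $r^{j_2}$ at \emph{any} point of $\calL_{p^{(\ell,j_2)},p^{(r,j_2)}}$, in particular at $p^{(r,j_2)}$; this is what makes the uniform $p^{(\ell,j_1)}\preceq p^{(r,j_2)}$ bound usable and avoids the case split on the ordering of $j_1$ and $j_2$.
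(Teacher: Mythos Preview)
Your argument is correct, and it is a genuinely different (and somewhat cleaner) packaging than the paper's. The paper splits into the two cases $j_1<j_2$ and $j_1>j_2$: in the first case it compares $p^{(\ell,j_1)}$ with $p^{(\ell,j_2)}$ directly (no appeal to Lemma~\ref{lemma: usage of RefinedTarski*} is needed there), and in the second case it compares $p^{(r,j_1)}$ with $p^{(r,j_2)}$ and then invokes Lemma~\ref{lemma: usage of RefinedTarski*} on \emph{both} rounds to rewrite each $r^{j_i}$ at the corresponding $r$-point. You instead recycle the intermediate observation from the proof of Lemma~\ref{hehehehe} that $p^{(\ell,j_1)}\preceq p^{(r,j_2)}$ whenever $q^{j_1}\preceq q^{j_2}$, and use this mixed $(\ell,r)$ pair as a single bridge, applying Lemma~\ref{lemma: usage of RefinedTarski*} only once (on round $j_2$). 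This buys you a case-free proof at the cost of relying on a claim that in the paper is stated only inside another proof; if you want to present it this way, it would be worth extracting that observation as a standalone statement rather than citing it from within the body of Lemma~\ref{hehehehe}.
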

	\begin{proof}
	We consider two cases when $j_1<j_2$ and when $j_1>j_2$.
	
		\textbf{Case 1:} $j_1<j_2.$ Considering the round $j_2$, by the definition on line \ref{define pr} and $q^{j_1}\preceq q^{j_2}$, we know that $p^{(\ell,j_1)}\preceq p^{(\ell,j_2)}$ before the loop on line \ref{for loop}. 
		In the loop on line \ref{for loop}, when $p^{(\ell,j_2)}$ is updated by $p^{(\ell,*)}$,  $p^{(\ell,j_2)}$ is monotonically non-decreasing.
  So at the end of the loop we still have $p^{(\ell,j_1)}\preceq p^{(\ell,j_2)}$.
  Given that $(p^{(\ell,j_1)},q^{j_1})\preceq (p^{(\ell,j_2)},q^{j_2})$, we have for every $t\in [a+b]\setminus [a]$:
		\begin{equation*}
		 q^{j_1}_{t-a}+r^{j_1}_{t-a}
		   =q^{j_1}_{t-a}+g(p^{(\ell,j_1)},q^{j_1})_t\leq 
		   q^{j_2}_{t-a}+g(p^{(\ell,j_2)},q^{j_2})_t=
		   q^{j_2}_{t-a}+r^{j_2}_{t-a},
		\end{equation*}
        and $r_{b+1}^{j_1}\leq r_{b+1}^{j_2}$.
		
		\textbf{Case 2:} $j_1>j_2.$ Case 2 is analogous to Case 1. Considering the round $j_1$, by the definition on line \ref{define pr} and $q^{j_1}\preceq q^{j_2}$, we know that $p^{(r,j_1)}\preceq p^{(r,j_2)}$ before the loop on line \ref{for loop}. 
		In the loop on line \ref{for loop}, when $p^{(r,j_1)}$ is updated by $p^{(r,*)}$, $p^{(r,j_1)}$ is monotonically non-increasing, so that $\forall t\in[a+b+1]\setminus [a]$, $g(p^{(r,j_1)},q^{j_1})_t$ is non-decreasing. So 
		we have $p^{(r,j_1)}\preceq p^{(r,j_2)}$ at the end of the loop on line \ref{for loop}.
		Using $(p^{(r,j_1)},q^{j_1})\preceq (p^{(r,j_2)},q^{j_2})$ and \Cref{lemma: usage of RefinedTarski*},
		we have for every $t\in [a+b]\setminus [a]$:
		\begin{align*}
		  q^{j_2}_{t-a}+r^{j_2}_{t-a}
		    &=q^{j_2}_{t-a}+g(p^{(\ell,j_2)},q^{j_2})_t
		    = q^{j_2}_{t-a}+g(p^{(r,j_2)},q^{j_2})_t\\  
		    &\geq q^{j_1}_{t-a}+g(p^{(r,j_1)},q^{j_1})_t
		    =q^{j_1}_{t-a}+g(p^{(\ell,j_1)},q^{j_1})_t
		    =q^{j_1}_{t-a}+r^{j_1}_{t-a},
		\end{align*}
		and $r_{b+1}^{j_2}\geq r_{b+1}^{j_1}$.
		
		This completes the proof of the lemma. 
	\end{proof}
	\begin{lemma}\label{lemma: rsolution}
		At the end of each round $i$, if $r_t^i\geq 0$ for $t\in [b+1]$, 
			then $(p^{(\ell,i)},q^i)$ is a solution to $\textsc{Tarski}^*(n,a+b)$ on $g$;
		if $r_t^i\leq 0$ for $t\in[b+1]$, then $\smash{(p^{(r,i)},q^i)}$ is a solution to $\smash{\textsc{Tarski}^*(n,a+b)}$ on $g$.
	\end{lemma}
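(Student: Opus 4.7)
The plan is to combine \Cref{hehehehe} (which controls the first $a$ coordinates of $g$ at $p^{(\ell,i)}$ and $p^{(r,i)}$) with \Cref{lemma: usage of RefinedTarski*} (which says the last $b+1$ coordinates of $g(\cdot,q^i)$ are constant over $\calL_{p^{(\ell,i)},p^{(r,i)}}$) and the explicit definition of $r^i$ in terms of $g(p^{(\ell,i)},q^i)$. The argument is essentially one of bookkeeping; the real content has been isolated in the earlier lemmas.

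First, I would unpack the definition: by construction, $r^i_{t-a}=g(p^{(\ell,i)},q^i)_t$ for every $t\in [a+b+1]\setminus [a]$, so the components $r^i_1,\dots,r^i_{b+1}$ record precisely the last $b+1$ coordinates of $g(p^{(\ell,i)},q^i)$. Then I would invoke \Cref{lemma: usage of RefinedTarski*} with $p_1=p^{(\ell,i)}$ and $p_2=p^{(r,i)}$ to deduce that $g(p^{(r,i)},q^i)_t=g(p^{(\ell,i)},q^i)_t=r^i_{t-a}$ for every $t\in [a+b+1]\setminus[a]$ as well. Finally, \Cref{hehehehe} supplies $g(p^{(\ell,i)},q^i)_t\ge 0$ and $g(p^{(r,i)},q^i)_t\le 0$ for every $t\in [a]$.

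With those three facts in hand, the two cases are immediate. If $r^i_t\ge 0$ for all $t\in [b+1]$, then considering $(p^{(\ell,i)},q^i)\in [n]^{a+b}$: the first $a$ coordinates of $g(p^{(\ell,i)},q^i)$ are non-negative by \Cref{hehehehe}, and the last $b+1$ are $r^i_1,\dots,r^i_{b+1}\ge 0$, so $g(p^{(\ell,i)},q^i)_t\ge 0$ for every $t\in [a+b+1]$, which is exactly the definition of a $\Tarski^*(n,a+b)$ solution on $g$. Symmetrically, if $r^i_t\le 0$ for all $t\in [b+1]$, I consider $(p^{(r,i)},q^i)$: its first $a$ coordinates of $g$ are non-positive by \Cref{hehehehe}, while by \Cref{lemma: usage of RefinedTarski*} the last $b+1$ equal $r^i_1,\dots,r^i_{b+1}\le 0$, again yielding a valid $\Tarski^*$ solution.

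There is no real obstacle; the only subtle point to flag is the asymmetry between the two cases. The definition of $r^i$ reads off $g(p^{(\ell,i)},q^i)$ directly, so the $\ge 0$ case works immediately, whereas the $\le 0$ case genuinely needs \Cref{lemma: usage of RefinedTarski*} to transfer the information from $p^{(\ell,i)}$ to $p^{(r,i)}$. I would make sure this step is stated explicitly rather than glossed over, since it is the only place where the refined (two-point) output of \Cref{alg: solve RefinedTarski*} is actually used in the correctness proof.
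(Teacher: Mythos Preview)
Your proof is correct and follows essentially the same approach as the paper's: invoke \Cref{hehehehe} for the first $a$ coordinates and unfold the definition of $r^i$ for the last $b+1$. You are in fact more explicit than the paper, which handles the $\le 0$ case with a bare ``Similarly'' while you correctly note that \Cref{lemma: usage of RefinedTarski*} is needed to transfer the values of the last $b+1$ coordinates from $p^{(\ell,i)}$ to $p^{(r,i)}$. One small aside: your closing remark that this is ``the only place'' the two-point output is used is not quite right---\Cref{lemma: usage of RefinedTarski*} is also invoked in the proof of \Cref{lemma: rmonotonicity}---but that does not affect the argument here.
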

	\begin{proof}
	By \Cref{hehehehe} we have $g(p^{(\ell,i)},q^i)_t\geq 0$ and $g(p^{(r,i)},q^i)_t\leq 0$ for all $t\in[a]$. So if $r_t^i\geq 0$ for all $t\in [b+1]$, then $\smash{g(p^{(\ell,i)},q^i)_t=r_{t-a}^i\ge 0}$ for all $t\in [a+b+1]\setminus [a]$
	and thus, $\smash{(p^{(\ell,i)},q^i)}$ is a solution to $\textsc{Tarski}^*(n,a+b)$ on $g$. Similarly if $r_t^i\leq 0$ for all $t\in [b+1]$, then $\smash{(p^{(r,i)},q^i)}$ is a solution to $\textsc{Tarski}^*(n,a+b)$ on $g$.
	\end{proof}

	\textbf{Query complexity.} For each round of Algorithm \ref{alg: solve Tarski*},  Algorithm \ref{alg: solve RefinedTarski*} is called $b+1$ times on line \ref{find pl pr} and each call of Algorithm \ref{alg: solve RefinedTarski*} will use $O(q(n,a))$ queries. The outer algorithm $\calB$ has no more than $q(n,b)$ rounds,
	which means the query complexity of Algorithm \ref{alg: solve Tarski*} is $O((b+1)\cdot q(n,a) \cdot q(n,b))$.
\end{proof}	

\section{Discussion and Open Problems}
While progress  has been made on improving the upper bounds for finding Tarski fixed points, the techniques for lower bounds remain limited.

For the black-box (query complexity) model studied in this paper, the key question left open is to close the  gap between $\Omega(\log^2 n)$ and $\smash{O(\log^{\lceil (k+1)/2\rceil} n)}$. The first gap is from $\Tarski(n,4)$, where the lower bound is $\Omega(\log^2 n)$ and the upper bound is $O(\log^3 n)$. Note that if one could improve the lower bound of $\Tarski(n,4)$ to get a tight bound $\Theta(\log^3 n)$, it would imply that $O(\log^2 n)$ is tight for $\Tarski^*(n,3)$ (while the tight bounds of $\Tarski^*(n,1)$ and $\Tarski^*(n,2)$ are $\Theta(\log n)$). Or even relaxing the goal, is it possible to prove a lower bound $\Omega(\log^3 n)$ for $\Tarski(n,k)$ when $k$ is a constant, say, $k=100$?

With regards to the white-box model, it is known that $\Tarski$ is in the intersection of PPAD and PLS~\cite{etessami2020tarski}, and so is in CLS~\cite{fearnley2021complexity} and EOPL~\cite{goos2022further}. It would also be very interesting to see if $\Tarski$ is complete for some computational complexity classes.

\section*{Acknowledgements}
We thank anonymous reviewers for helpful comments on an earlier draft.

\begin{flushleft}
\bibliographystyle{alpha}

\begin{thebibliography}{EPRY20}

\bibitem[CD08]{chen2008matching}
Xi~Chen and Xiaotie Deng.
\newblock Matching algorithmic bounds for finding a brouwer fixed point.
\newblock {\em Journal of the ACM (JACM)}, 55(3):1--26, 2008.

\bibitem[CD09]{chen2009complexity}
Xi~Chen and Xiaotie Deng.
\newblock On the complexity of 2d discrete fixed point problem.
\newblock {\em Theoretical Computer Science}, 410(44):4448--4456, 2009.

\bibitem[Con92]{condon1992complexity}
Anne Condon.
\newblock The complexity of stochastic games.
\newblock {\em Information and Computation}, 96(2):203--224, 1992.

\bibitem[DQY11]{dang2011computational}
Chuangyin Dang, Qi~Qi, and Yinyu Ye.
\newblock Computational models and complexities of tarski’s fixed points.
\newblock Technical report, Stanford University, 2011.

\bibitem[EPRY20]{etessami2020tarski}
Kousha Etessami, Christos~H. Papadimitriou, Aviad Rubinstein, and Mihalis
  Yannakakis.
\newblock Tarski's theorem, supermodular games, and the complexity of
  equilibria.
\newblock In Thomas Vidick, editor, {\em 11th Innovations in Theoretical
  Computer Science Conference, {ITCS} 2020, January 12-14, 2020, Seattle,
  Washington, {USA}}, volume 151 of {\em LIPIcs}, pages 18:1--18:19. Schloss
  Dagstuhl - Leibniz-Zentrum f{\"{u}}r Informatik, 2020.

\bibitem[FGHS21]{fearnley2021complexity}
John Fearnley, Paul~W Goldberg, Alexandros Hollender, and Rahul Savani.
\newblock The complexity of gradient descent: Cls= ppad $\cap$ pls.
\newblock In {\em Proceedings of the 53rd Annual ACM SIGACT Symposium on Theory
  of Computing}, pages 46--59, 2021.

\bibitem[FPS20]{fearnley2020fasterarxiv}
John Fearnley, D{\"o}m{\"o}t{\"o}r P{\'a}lv{\"o}lgyi, and Rahul Savani.
\newblock A faster algorithm for finding {Tarski} fixed points.
\newblock {\em arXiv preprint arXiv:2010.02618}, 2020.

\bibitem[GHJ{\etalchar{+}}22]{goos2022further}
Mika G{\"o}{\"o}s, Alexandros Hollender, Siddhartha Jain, Gilbert Maystre,
  William Pires, Robert Robere, and Ran Tao.
\newblock Further collapses in tfnp.
\newblock {\em arXiv preprint arXiv:2202.07761}, 2022.

\bibitem[MR90]{milgrom1990rationalizability}
Paul Milgrom and John Roberts.
\newblock Rationalizability, learning, and equilibrium in games with strategic
  complementarities.
\newblock {\em Econometrica: Journal of the Econometric Society}, pages
  1255--1277, 1990.

\bibitem[Pap94]{papadimitriou1994complexity}
Christos~H Papadimitriou.
\newblock On the complexity of the parity argument and other inefficient proofs
  of existence.
\newblock {\em Journal of Computer and system Sciences}, 48(3):498--532, 1994.

\bibitem[Tar55]{tarski1955lattice}
Alfred Tarski.
\newblock A lattice-theoretical fixpoint theorem and its applications.
\newblock {\em Pacific journal of Mathematics}, 5(2):285--309, 1955.

\bibitem[Top79]{topkis1979equilibrium}
Donald~M Topkis.
\newblock Equilibrium points in nonzero-sum n-person submodular games.
\newblock {\em Siam Journal on control and optimization}, 17(6):773--787, 1979.

\bibitem[Top98]{topkis1998supermodularity}
Donald~M Topkis.
\newblock {\em Supermodularity and Complementarity}.
\newblock Princeton University Press, 1998.

\end{thebibliography}

\newcommand{\etalchar}[1]{$^{#1}$}

\end{flushleft}
\end{document}